\newtheorem{conj}{\bf Conjecture}
\newtheorem{fact}[theorem]{\bf Fact}
\newtheorem{claim}{\bf Claim}
\newtheorem*{claim*}{Claim}
\DeclarePairedDelimiter\floor{\lfloor}{\rfloor}
\title{The Power of One Clean Qubit in Communication Complexity}
\titlerunning{Communication Complexity with One Clean Qubit}
\author{Hartmut Klauck}{Centre for Quantum Technologies, Singapore}{hklauck@gmail.com}{}{This work is funded by the Singapore Ministry of Education and by the Singapore National Research Foundation. Also supported by Majulab UMI 3654.}
\author{Debbie Lim}{Centre for Quantum Technologies, Singapore}{limhueychih@gmail.com}{}{}
\authorrunning{H. Klauck and D. Lim}
\subjclass{F.1.2 Modes of Computation, F.1.3 Complexity Measures and Classes}
\keywords{Quantum Communication Complexity, One-clean-qubit Model}
\begin{document}

\maketitle

\begin{abstract}
We study quantum communication protocols, in which the players' storage starts out in a state where one qubit is in a pure state, and all other qubits are totally mixed (i.e. in a random state), and no other storage is available (for messages or internal computations). This restriction on the available quantum memory has been studied extensively in the model of quantum circuits, and it is known that classically simulating quantum circuits operating on such memory is hard when the additive error of the simulation is exponentially small (in the input length), under the assumption that the polynomial hierarchy does not collapse.

We study this setting in communication complexity. The goal is to consider larger additive error for simulation-hardness results, and to not use unproven assumptions.

We define a complexity measure for this model that takes into account that standard error reduction techniques do not work here. We define a clocked and a semi-unclocked model, and describe efficient simulations between those.

We characterize a one-way communication version of the model in terms of weakly unbounded error communication complexity.

Our main result is that there is a quantum protocol using one clean qubit only and using $O(\log n)$ qubits of communication, such that any classical protocol simulating the acceptance behaviour of the quantum protocol within additive error $1/poly(n)$ needs communication $\Omega(n)$.

We also describe a candidate problem, for which an exponential gap between the one-clean-qubit communication complexity and the randomized complexity is likely to hold, and hence a classical simulation of the one-clean-qubit model within {\em constant} additive error might be hard in communication complexity. We describe a geometrical conjecture that implies the lower bound.

 \end{abstract}

\section{Introduction}

The computational power of quantum models of computation with different memory restrictions has been studied in order to understand the use of imperfectly implemented qubits. Some possible types of memory restrictions include having only few qubits that are in a pure state plus an abundance of qubits that start in the totally mixed state \cite{Hard}, having memory that starts in an incompressible state that needs to be returned unchanged at the end of the computation, plus some limited auxiliary space available \cite{Buhrman}, or simply having very little memory for the computation \cite{ambainis,ksw:dpt-siam,klauck:tradeoffs,asw:adversarydpt}.
The underlying idea in these topics is to study the power of models of quantum computing in which the quantum memory is weak, but the control of this memory is good. This is in contrast to the study of models of quantum computation, where the underlying memory is good, but the control is weak, or restricted, such as the Boson-Sampling model \cite{Boson}. Both are a step towards understanding the power of quantum computing models that are closer to being implementable than the standard circuit model, and eventually to demonstrate quantum supremacy (i.e., to show that for some problem (of possibly small practical interest) quantum computers that can be built outperform classical computers demonstrably).

This paper explores the potential of a model of quantum communication that uses memory containing only a small number of qubits that start in a known pure state, in particular the power of a having only a single clean qubit (plus many qubits that start in the totally mixed state, i.e., start in a random state).

The one-clean-qubit model originally proposed by Knill and Laflamme \cite{KnillandLaflamme} is a model of quantum computing where the memory starts in the tensor product of a single qubit in a pure state $\ket{0}$ with the other $m$ qubits that are in the completely-mixed state, with no further storage allowed. This initial state is described by the density matrix

\[\rho=\ket{0}\bra{0}\otimes\frac{I}{2^m}.\]

The model was originally motivated by the nuclear magnetic resonance (NMR) approach to quantum computing, where the initial state may be highly mixed.
Quantum circuits operating on such memory  are able to perform tasks that look hard classically, such as estimating Jones polynomials, computing Schatten $p$-norms, spectral density approximation, testing integrability, computation of fidelity decay \cite{KnillandLaflamme, Shor, pnorms, decay, integrability}, just to name a few. Recently, K.~Fujii et al.~showed that quantum circuits under the one-clean-qubit restriction cannot be efficiently classically simulated unless the polynomial hierarchy collapses to the second level \cite{Hard}. In other words, assuming that the polynomial hierarchy does
 not collapse, polynomial size quantum circuit operating under the one-clean qubits restriction can have acceptance/rejection probabilities such that any classical randomized circuit that has the same acceptance/rejection probabilities up to additive error $1/exp(n)$ must have superpolynomial size.
 We note here that we will not consider simulations with multiplicative error in this paper, since those pose a much stronger requirement on the simulation, for instance the simulating algorithm must replicate events of tiny probability with approximately the same probability, and hence such simulations are much less interesting.

 In this paper, we study the hardness of simulating the one-clean-qubit model classically in the model of  communication complexity. We will consider simulations of the one-clean-qubit model with different amounts of additive errors, namely $\frac{1}{poly(n)}$ and $\Omega(1)$.


\subsection{Organization}
After some preliminaries in Section \ref{2},
in Section \ref{3} we discuss related work. In Section \ref{results}, we sketch our results. In Section \ref{r1}
we develop our model of quantum communication with one clean qubit. We motivate the main complexity measure and
introduce the concepts of clocked and semi-unclocked protocols. Section \ref{ow} is about our characterization of one-way communication complexity in our model. Section \ref{r2} discusses our main result, which concerns the hardness of classically simulating the one-clean-qubit model with additive error.

\section{Preliminaries}\label{2}
\subsection{Communication Complexity}
Yao's \cite{Yao} model of communication complexity consists of two players, Alice and Bob, who are each given private inputs $x\in X$ and $y\in Y$ respectively. In addition, they both know the function $f$ and agree to a certain communication protocol beforehand. The task they wish to perform is to compute $z=f(x, y)$. Having no knowledge of each others' inputs, they have to communicate with each other in order to obtain the result $z$. Communication complexity asks the question  "how much communication is needed to compute $f(x, y)$?", and assumes that the players have unlimited computational power.

For formal definitions regarding standard types of communication protocols see \cite{kushilevitz&nisan:cc}, regarding quantum communication complexity see \cite{ViS}.
We will use the following notations:

\begin{definition}
$Q(f), R(f)$ denote the quantum (without entanglement) and randomized (with public coin) communication complexities of a function $f$ with error $1/3$. A subscript like $Q_\epsilon(f)$ denotes other errors $\epsilon$.
\end{definition}

\section{Related Work}\label{3}
There has been a lot of research focusing on the hardness of classical simulations of restricted models of quantum computing under certain assumptions \cite{IQP, Boson, arthumerlin, cqc, cliffordcircuits, morimae, toffoli, bs, unlikely}. That is to say, a reasonable assumption in complexity theory leads to the impossibility of efficient sampling by a classical computer according to an output probability distribution that can be generated by a quantum computation model. For instance, it is proven that classical simulation with multiplicative error of the IQP model \cite{IQP} and Boson sampling \cite{Boson} is hard, unless the polynomial-time hierarchy collapses.

It is interesting to ask if such a result holds for the one-clean-qubit model as well. Over the past few years, the one-clean-qubit model has be shown to be capable of efficiently solving problems where no efficient classical algorithm is known, such as estimating Jones polynomials, computing Schatten $p$-norms, spectral density approximation, testing integrability and computation of fidelity decay \cite{KnillandLaflamme, Shor, pnorms, decay, integrability}. It has been conjectured that the one-clean-qubit model can be more powerful than classical computing for some problems. However, there has been no proof for such a conjecture. In \cite{classsim}, T. Morimae and K. Koshiba showed that if the output probability distribution of the one-clean-qubit model can be classically efficiently approximated (with at most an exponentially small additive error) then $BQP\subseteq BPP$. Although the belief that $BQP\neq BPP$ is maybe less strong than that of $P\neq NP$ or that the polynomial hierarchy does not collapse, there is still a good case for it and the assumption is necessary for simulation hardness anyway. Therefore the results in \cite{classsim} suggest that the one-clean-qubit model is unlikely to be classically efficiently simulatable with exponentially small additive error.

T. Morimae et al. introduced $DQC1_k$, a modified version of the one-clean-qubit model where the workspace starts with one clean qubit and $k$ qubits are measured at the end of the computation. They showed that the $DQC1_k$ model cannot be efficiently classically simulated for $k\geq 3$ (within constant {\em multiplicative} error) unless the polynomial hierarchy collapses \cite{morimae}.

Recently, K. Fujii et al. showed via circuit complexity that the one-clean-qubit model cannot be efficiently classically simulated with $\frac{1}{exp(n)}$ additive error unless the polynomial hierarchy collapses to the second level \cite{Hard}. 

All existing results regarding the efficient classical simulation of the one-clean-qubit model are conditional (e.g.~rely on non-collapse of the polynomial hierarchy) and require simulations to have exponentially small additive error.

We also mention work on classical memory-restricted communication complexity (e.g.~\cite{spcc}) in which some similar issues appear as in this work.

\section{Overview of Results}\label{results}
\begin{itemize}
    \item Definition of a complexity measure for the one-clean-qubit model in communication complexity:\\
    The complexity measure (cost) of a one-clean-qubit protocol is given by $c\cdot\big(\frac{1}{\epsilon^2}\big)$, where $c$ is the communication and $\epsilon$ is the bias. We define a clocked and a semi-unclocked version.
    \item Simulation of a clocked $k$-clean-qubit models using only one-clean qubit is inexpensive:\\
    Such simulations cost  $(c+1)\cdot\big(\frac{2^k}{\epsilon}\big)^2$, where $c$ is the communication and $\epsilon$ is the bias.
    \item The clocked $k$-clean-qubit model can be simulated by the semi-unclocked one-clean-qubit model:\\
    Such simulations incur a cost of  $O(c\log{c})\cdot\big(\frac{2^k}{\epsilon}\big)^2$, where $c$ is the communication and $\epsilon$ is the bias.
    \item Upper and lower bounds on the complexity measure of the one-way one-clean-qubit communication complexity model:\\
    The complexity measure of the one-way one-clean-qubit communication complexity model denoted as $Q_{[1]}^{A\to B}(f)$ is bounded by $2^{\Omega(PP(f))-O(\log{n})}\leq Q_{[1]}^{A\to B}(f)\leq 2^{O(PP(f))}$.
    \item Classically simulating the one-clean-qubit model with $\frac{1}{poly(n)}$ additive error requires an exponential increase in communication:\\
    We consider the $MIDDLE$ problem and give a quantum protocol with one-clean qubit that requires $O(\log{n})$ communication while any classical simulation with $\frac{1}{poly(n)}$ additive error requires $\Omega(n)$ communication.

We stress that in previous results about the hardness of simulating the one-clean-qubit model (in circuit complexity)
the additive error must be of size at most $1/exp(n)$ for the simulation to be hard, which stems from low probability events being considered that one would never observe realistically. That means that running the one-clean-qubit circuit as an experiment, and observing an outcome that contradicts classicality is an event that happens only with exponentially small probability, and the classical simulation is only hard because of such extremely low probability events. Our result also uses low probability events, but $1/poly(n)$ is much more reasonable, and the events are observable when repeating such a protocol $poly(n)$ times.

    \item Simulating the one-clean-qubit model with constant additive error:\\
    We consider a problem $ABC$ as a candidate to show that simulating the one-clean-qubit model with {\em constant} additive error is hard, and construct a quantum protocol that requires $O(\log{n})$ communication using one clean qubit for $ABC$. We conjecture that any classical simulation with constant additive error requires $\Omega(\sqrt{n})$ communication and give a matching upper bound.
\end{itemize}
\textbf{Disclaimer: All $I$'s used in this paper are identity matrices whose dimensions are clear from the context.}

\section{Communication Complexity of the One-Clean-Qubit Model}\label{r1}
\subsection{The One-Clean-Qubit Model}
\begin{definition}[$k$-Clean-Qubit Model]
   In a $k$-clean-qubit protocol, all storage initially consists  of only $k$ qubits in a clean state $\ket{0}$, while the rest ($m$ qubits) are in the totally mixed state. The players communicate as in a standard quantum protocol. Only at the end of the computation, a single, arbitrary projective measurement (not depending on the inputs) is performed.
\end{definition}

 By this definition, all storage in the one-clean-qubit model consists of only one qubit in a clean state $\ket{0}$, while the rest ($m$ qubits) are in the totally mixed state. This can  be described by the density matrix
\begin{equation}\label{model}
    \rho=\ket{0}\bra{0}\otimes\frac{I}{2^m}.
\end{equation}

 A protocol in this model for a function $f$ communicates $c$ qubits. Assume the protocol has a bias of $\epsilon$ and hence an error of $\frac{1}{2}-\epsilon$. In general, it is not possible to improve the error to, say, $\frac{1}{3}$. Following \cite{Shor}, we therefore allow the computation to be repeated (virtually) $O(\frac{1}{{\epsilon}^2})$ times until a correctness probability of at least $\frac{2}{3}$ is achieved, and therefore define the cost of the (unrepeated) protocol to be $c\cdot(\frac{1}{\epsilon})^2$ qubits.

 \begin{definition}[$Q_{[1]}(f)$]\label{compmeasure}
 Let $\cal P$ denote a one-clean-qubit clocked (explained later) protocol for a function $f:X\times Y\to\{0,1\}$, such that 0-inputs are accepted with probability at most $p-\epsilon$ and 1-inputs are accepted with probability at least $p+\epsilon$ for some constant $p>0$ and that uses communication $c$ at most on all inputs. The cost of $\cal P$ is then $c/\epsilon^2$.

   We denote the complexity measure of the clocked one-clean-qubit model by $Q_{[1]}(f)=\inf_{\cal P}\frac{communication({\cal P})}{bias({\cal P})^2}$, where the infimum is over all protocols $\cal P$  for $f$.
\end{definition}

The motivation behind Definition \ref{compmeasure} that it seems unlikely that the success probability can always be amplified arbitrarily. Therefore, we allow the protocol to  run with an arbitrarily bad bias but include the cost that it would take to bring this bias up by a standard amplification (repeat the computation $O\big(\frac{1}{bias^2}\big)$ times): in the situation described in Definition \ref{compmeasure} by a standard Chernoff bound repeating $t=4/\epsilon^2$ times (and accepting if at least $pt$ runs accepted)  would lead to error at most $1/3$.

There is no prior entanglement allowed in this model because the EPR-pairs could be used to create more pure qubits, simply by sending one qubit from one communicating party to another, who can then make the state $\ket{00}$. It is also essential that measurements are performed only at the end of the computation, or a pure state could be obtained by measuring the state (\ref{model}).

In our paper, we allow arbitrary projective measurements in the one-clean-qubit model. There are papers such as \cite{Shor} and \cite{classsim} defining the one-clean-qubit model in a way such that it measures only one qubit at the end of the computation. However, in Theorem \ref{strong}, we show that there is only negligible difference between these definitions in communication complexity.

\subsection{Clocked and Semi-unclocked Models}
There are two types of models being considered: the clocked model and the semi-unclocked model.

\begin{definition}[Clocked model]
   In the clocked model, the message in round $i$ is computed by a unitary that can depend on $i$. In other words, the protocol knows $i$ without having to store $i$ anywhere. The communication channel of a clocked model is ghosted, i.e. different qubits can be communicated in different rounds.
\end{definition}
\begin{figure}[h]
    \centering
    \includegraphics[width=\textwidth, height=3cm]{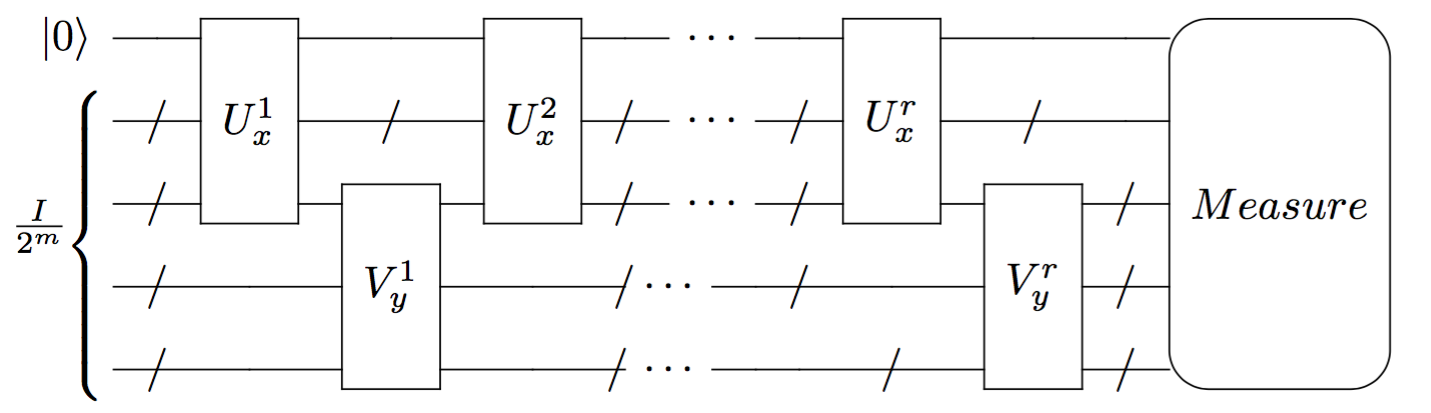}
    \caption{Clocked model}
    \label{fig:my_label}
\end{figure}

Protocols in the clocked model implicitly use a counter to tell the protocol which round it is in. This counter could be considered as extra classical storage, so we define another model that does not allow this. In that model, however, protocols still need to know when to stop, and since no intermediate measurements are allowed, we simply switch the protocol off after the correct number of rounds, and measure.

\begin{definition}[Semi-unclocked model]
   In the semi-unclocked model, the same unitary must be applied in every round. The protocol terminates after a fixed number of rounds. The communication channel of a semi-unclocked model is fixed, i.e., the same qubits have to be communicated in every round.
\end{definition}

\begin{figure}[h]
    \centering
    \includegraphics[width=\textwidth, height=3cm]{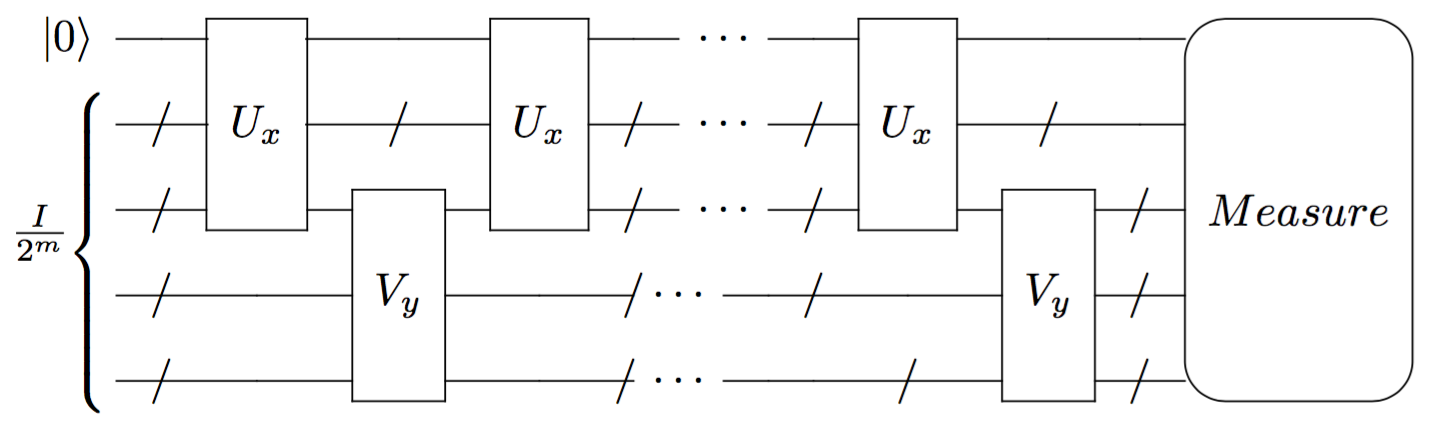}
    \caption{Semi-unclocked model}
    \label{fig:my_label}
\end{figure}

\begin{example}\label{ex:IP}
   The inner product modulo 2 problem is defined as follows: \[IP_2(x, y)=\sum_i x_iy_i\mod 2, \hspace{1mm} \mbox{where} \hspace{1mm} x, y\in\{0, 1\}^n.\] Under the clocked model $\hat{P}$ shown in Figure \ref{IP2}, let $U_x^i$ be Alice's unitary and let $V_y^i$ be Bob's for $i=1\cdots n$. We start with two clean qubits. The first qubit is meant  to store Alice's $x_i$ while the second stores $\sum_i x_iy_i \mod 2$. The protocol (informally)  goes as follows:

   In the first round, Alice stores $x_1$ in the first qubit and sends the two qubits to Bob, who multiplies $x_1$ in the first qubit with his $y_1$ and stores the product in the second qubit. He then sends the first qubit back to Alice. For every round $i=2,\cdots,n$,
   \begin{enumerate}
       \item $U_x^i$ first XORs $\ket{x_{i-1}}$ on the first qubit with $x_{i-1}$, thereby restoring the qubit to $\ket{0}$, before storing the value $x_i$ in it.
       \item Alice sends the first qubit to Bob.
       \item $V_y^i$ multiplies $y_i$ with $x_i$ (stored in the first qubit) and adds the product to the sum stored in the second qubit modulo 2.
       \item Bob sends the first qubit back to Alice.
   \end{enumerate}

   The communication terminates after a total number of $2n-1$ rounds and the bias is $\frac{1}{2}$ (i.e. zero error). Bob does the measurement, the total communication is $2n$.

   \begin{figure}[h]
      \centering
      \includegraphics[width=\textwidth, height=2.5cm]{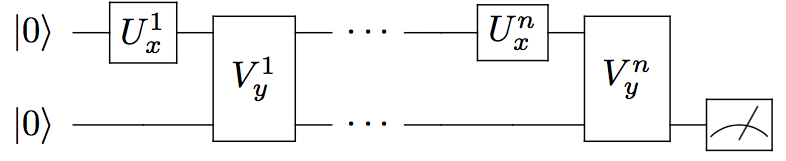}
      \caption{Clocked two-clean-qubit model for computation of inner product modulo 2}
      \label{IP2}
   \end{figure}
   $\hat{P}$ can be simulated with a clocked one-clean-qubit protocol that uses 1 clean qubit and 2 mixed qubits.
   The unitary $\mathcal{M}$ does the following:
   $$\mathcal{M}:
   \begin{cases}
   \ket{0}\otimes\ket{0}\otimes\ket{0}\mapsto\ket{1}\otimes\ket{0}\otimes\ket{0}\\
   \ket{0}\otimes\ket{z_1}\otimes\ket{z_2}\mapsto\ket{0}\otimes\ket{z_1}\otimes\ket{z_2}\\
   \end{cases},$$ where $\ket{z_1}$ or $\ket{z_2}\neq\ket{0}$. Extend to a unitary arbitrarily. In other words, $\mathcal{M}$ flips the first qubit if the next two qubits are both in the $\ket{0}$ state (this happens with probability $\frac{1}{4}$). After applying $\mathcal{M}$, the protocol is carried out as per $\hat{P}$. The measurement is done as follows:
   \begin{itemize}
       \item If the first qubit is $\ket{0}$, a "coin toss" is being performed for the output (e.g.~measure yet another mixed qubit).
       \item If the first qubit is $\ket{1}$, the measurement is done as per $\hat{P}$.
   \end{itemize}
   Note that the two measurements can be combined into one.

   Therefore, we get an error probability of $$\frac{3}{4}\cdot\frac{1}{2}=\frac{3}{8},$$ and a bias of $\frac{1}{8}$.
   The total communication is $2n+1$ and hence the cost is $64(2n+1)=O(n)$.
\end{example}

We now compare the $k$-clean-qubit model with the one-clean-qubit model and also the clocked model with the semi-unclocked model. We prove the following theorems:
\begin{theorem}\label{c-c}
   Given a clocked $k$-clean-qubit protocol $\mathcal{P}$ for a function $f$ that has communication $c$ and a bias of $\epsilon$, there exists a clocked one-clean-qubit protocol $\Tilde{\mathcal{P}}$ for $f$ that has communication $c$ (or $c+1$ depending on which player does the measurement), and a bias of $\frac{\epsilon}{2^k}$ .
\end{theorem}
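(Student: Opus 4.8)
## Proof proposal

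The plan is to reduce the number of clean qubits from $k$ to $1$ by producing the missing $k-1$ clean qubits ``for free'' out of the mixed register, at the cost of only occasionally succeeding. Concretely, I would designate $k-1$ of the $m$ totally mixed qubits as a scratch block, and prepend to $\mathcal{P}$ a single unitary $\mathcal{M}$ (applied by whichever player holds the storage at the start --- say Alice) that acts as a ``detector'' for the all-zero pattern on that block: it flips the one genuine clean qubit from $\ket 0$ to $\ket 1$ precisely when the scratch block is in the state $\ket{0^{k-1}}$, and does nothing otherwise, extended arbitrarily to a unitary. Since the scratch block is maximally mixed, it is in $\ket{0^{k-1}}$ with probability $1/2^{k-1}$, and conditioned on that event the combined register (genuine clean qubit, now in $\ket 1$, together with the $k-1$ scratch qubits in $\ket 0$) is exactly $k$ qubits in a known pure state --- after relabelling, this is the initial state required by $\mathcal{P}$. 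The new protocol $\Tilde{\mathcal{P}}$ then runs $\mathcal{P}$ verbatim on this register using the same communication $c$ (the extra unitary $\mathcal{M}$ is local and adds no communication).

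The second ingredient is the measurement. At the end, $\Tilde{\mathcal{P}}$ must distinguish the ``good'' branch (the marker qubit was $\ket 1$, so $\mathcal{P}$ ran correctly) from the ``bad'' branch (marker qubit $\ket 0$, the scratch block was not all zero, so the run is garbage). Following the $IP_2$ example in the excerpt: if the marker qubit is $\ket 1$ we output according to $\mathcal{P}$'s projective measurement; if it is $\ket 0$ we output a fair coin flip (e.g.\ by measuring a fresh mixed qubit). These two behaviours can be folded into a single projective measurement on the whole register, which is all the model permits. The one subtlety is that the marker qubit is generically entangled with the rest of the state by the end of the protocol, so ``the marker qubit is $\ket1$'' is not a clean classical event; but since $\mathcal{M}$ only ever maps $\ket{0}\otimes\ket{0^{k-1}}\mapsto\ket1\otimes\ket{0^{k-1}}$ and fixes every state orthogonal to $\ket{0^{k-1}}$ on the scratch block, the ``$\ket1$'' and ``$\ket0$'' subspaces of the marker qubit are genuinely invariant under the subsequent evolution when we also track the (untouched) orthogonal complement on the scratch block --- so the two branches stay coherently separated and the final projective measurement can be built to read off the branch first and then the appropriate outcome.

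For the bias analysis: with probability $2^{-(k-1)}$ we are in the good branch and $\Tilde{\mathcal{P}}$ inherits $\mathcal{P}$'s acceptance gap, so the acceptance probability of $\Tilde{\mathcal P}$ equals $2^{-(k-1)}\cdot\big(\text{acc}_{\mathcal P}\big) + (1-2^{-(k-1)})\cdot\tfrac12$. A $0$-input of $\mathcal{P}$, accepted with probability at most $p-\epsilon$, is therefore accepted by $\Tilde{\mathcal P}$ with probability at most $2^{-(k-1)}(p-\epsilon) + (1-2^{-(k-1)})\tfrac12$, and a $1$-input with probability at least $2^{-(k-1)}(p+\epsilon)+(1-2^{-(k-1)})\tfrac12$; these straddle the threshold $p' := 2^{-(k-1)}p + (1-2^{-(k-1)})\tfrac12$ with a gap of $2^{-(k-1)}\epsilon$ on each side, i.e.\ bias $\epsilon/2^{k-1}$. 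The claimed bound in the theorem is $\epsilon/2^{k}$, which is weaker, so the estimate above suffices with room to spare (the factor-of-two slack absorbs, e.g., the case where one wants the fair-coin branch implemented by a qubit that costs an extra round, or a looser threshold). Finally, the $c$-versus-$c{+}1$ clause: if the player who must perform the final measurement is not the one currently holding the storage, one extra qubit of communication hands it over; otherwise $c$ suffices. The main obstacle I anticipate is not any single calculation but making the ``branch stays coherent under $\mathcal P$'s evolution'' argument airtight --- one has to check that nothing in $\mathcal{P}$, which was designed assuming $k$ honest clean qubits, does something ill-defined on the bad branch, and that the final single projective measurement can simultaneously encode ``which branch'' and ``which outcome''; this is exactly the ``the two measurements can be combined into one'' remark in Example~\ref{ex:IP}, now needed in general rather than for a specific protocol.
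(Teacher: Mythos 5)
Your construction differs from the paper's in one crucial respect, and that difference is exactly where the proof breaks. The paper keeps the single clean qubit as a pure flag that is \emph{never touched} by $\mathcal{P}$'s unitaries: the flag is set by checking whether $k$ designated \emph{mixed} qubits happen to be in $\ket{0^k}$ (probability $2^{-k}$), those $k$ mixed qubits then play the role of \emph{all} of $\mathcal{P}$'s clean qubits, and at the end the fixed projective measurement simply reads the untouched flag and either applies $\mathcal{P}$'s measurement or a coin flip. You instead recycle the genuine clean qubit as both the flag and one of $\mathcal{P}$'s $k$ working clean qubits (checking only $k-1$ mixed qubits, hence your $2^{-(k-1)}$). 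But then $\mathcal{P}$'s input-dependent unitaries act on the flag, so at the end of the protocol the ``good branch'' and ``bad branch'' live in the subspace $U_{x,y}\bigl(\ket{1}\ket{0^{k-1}}\bra{1}\bra{0^{k-1}}\otimes I\bigr)U_{x,y}^\dagger$ and its complement --- orthogonal, yes, but the projector separating them depends on both players' inputs, whereas the model requires the single final projective measurement to be input-independent. Your claim that the two branches ``stay coherently separated'' in a way a fixed measurement can read off is false in general: nothing prevents $\mathcal{P}$ from rotating the marker arbitrarily, and you yourself identify this as the main obstacle without resolving it.

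This is not cosmetic, because your bias calculation relies on the bad branch accepting with probability exactly $\tfrac12$. Without an input-independent way to detect the bad branch, that branch contributes some arbitrary input-dependent acceptance probability $\beta_{x,y}\in[0,1]$ weighted by $1-2^{-(k-1)}$, which can completely swamp the $2^{-(k-1)}\epsilon$ signal from the good branch (e.g.\ $\beta_{x,y}$ could favour acceptance on $0$-inputs). The factor $2^{k}$ versus $2^{k-1}$ is therefore not slack to be absorbed: paying the extra factor of $2$ is precisely what lets the clean qubit remain an inert, classically readable flag. The fix is to adopt the paper's version --- draw all $k$ of $\mathcal{P}$'s clean qubits from the mixed pool and reserve the one clean qubit exclusively as the control for the combined final measurement $\ket{1}\bra{1}\otimes P_{\mathrm{acc}}+\ket{0}\bra{0}\otimes P_{\mathrm{coin}}$; the rest of your argument (the local unitary $\mathcal{M}$ adding no communication, the $c$ versus $c+1$ clause, the bias arithmetic) then goes through with $2^{-k}$ in place of $2^{-(k-1)}$.
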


\begin{theorem}\label{strong}
     Given a clocked $k$-clean-qubit protocol $\Tilde{\mathcal{P}}$ for a function $f:X\times Y\to\{0,1\}$ with a ghosted communication channel, that does an arbitrary projective measurement with two outcomes, has communication $c$ and a bias of $\epsilon$, there exists a semi-unclocked one-clean-qubit protocol $\mathcal{P}_f$ for $f$ with a fixed communication channel, that does a measurement on one qubit, has communication $O(c\log{c})$ and a bias of $\Omega(\frac{\epsilon}{2^k})$.
\end{theorem}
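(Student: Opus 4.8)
The plan is to convert the clocked $k$-clean-qubit protocol $\tilde{\mathcal P}$ into a semi-unclocked one-clean-qubit protocol in three stages: first reduce $k$ clean qubits to one, then make the protocol unclocked by simulating the round counter with a portion of the mixed qubits, and finally replace the arbitrary two-outcome projective measurement by a measurement on a single qubit. By Theorem~\ref{c-c} (and the fact that the one-clean-qubit model allows arbitrary projective measurements), the first stage is essentially free: it turns $\tilde{\mathcal P}$ into a clocked one-clean-qubit protocol with the same communication (up to $+1$) and bias $\Omega(\epsilon/2^k)$, and this is where the $2^k$ loss in the statement comes from. So the real work is in the last two stages, and I would carry them out on a clocked one-clean-qubit protocol $\mathcal Q$ with communication $c$ and bias $\epsilon' = \Omega(\epsilon/2^k)$.

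For the declocking step, the idea is to store the round number in $O(\log c)$ of the mixed qubits and have a single fixed unitary that (i) reads the counter, (ii) applies the appropriate round unitary $U_i$ or $V_i$ conditioned on the counter value, and (iii) increments the counter. The obstacle is that the counter register starts in the totally mixed state, i.e.\ in a uniformly random value $r$, not in $0$; so the protocol effectively runs the rounds in cyclic order starting from an offset $r$ modulo some period $N \ge c$. To handle this I would use the single clean qubit as a ``first-round flag'': on the very first application the clean qubit is $\ket 0$, which we detect and use to reset the counter to $0$ (and then flip the flag to $\ket 1$), exactly mirroring the trick already used in Example~\ref{ex:IP} and Theorem~\ref{c-c}. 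Running the fixed unitary for exactly $c$ rounds (a fixed, input-independent number, which the semi-unclocked model permits) then reproduces $\mathcal Q$'s computation. One must also deal with the ghosted-versus-fixed channel discrepancy: in $\mathcal Q$ different qubits may be sent in different rounds, so in the semi-unclocked version I would fix a single communication register large enough to hold the union of all messages, swapping the round-$i$ message qubits into that register before each send — this is what inflates communication from $c$ to $O(c \log c)$ (roughly $c$ rounds times $O(\log c)$ qubits for counter plus message routing).

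For the measurement step, the arbitrary projective measurement $\{\Pi, I-\Pi\}$ on the whole register must be converted to a computational-basis measurement of one qubit. The standard move is to append one more clean qubit (we only have one clean qubit total, but by Theorem~\ref{c-c}-style padding this costs only a constant factor in bias), coherently compute the predicate ``are we in the $\Pi$ subspace'' into it via the unitary that diagonalizes $\Pi$, and measure that qubit; equivalently, absorb the basis-change unitary for $\Pi$ into the final round unitary. Since the measurement is promised to be input-independent, this unitary is fixed and can be folded into the protocol cleanly. Combined with the output convention of Definition~\ref{compmeasure} (accept with probability $p\pm\epsilon$), the bias is preserved up to a constant, giving the claimed $\Omega(\epsilon/2^k)$.

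I expect the main obstacle to be the declocking argument — specifically, making rigorous that a \emph{single} fixed unitary acting on (clean qubit) $\otimes$ (counter) $\otimes$ (work qubits) $\otimes$ (fixed channel), iterated a fixed number of times, faithfully simulates a clocked protocol despite the counter starting from a random offset, and bookkeeping the exact number of iterations and channel width so that the $O(c\log c)$ communication bound holds. The bias tracking through the two conditioning tricks (first-round flag, measurement qubit) is routine Chernoff/averaging as in Theorem~\ref{c-c}, so I would state it briefly and focus the writeup on the counter construction and the channel-fixing reduction.
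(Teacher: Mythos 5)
Your Stage 1 (invoking Theorem~\ref{c-c}) and your measurement-conversion step (computing membership in the measurement subspace into a fresh qubit via a basis-change unitary, at a constant-factor cost in bias) both match the paper. The genuine gap is in the declocking step. You propose to use the clean qubit as a first-round flag and, conditioned on it, ``reset the counter to $0$.'' But the counter register starts in the totally mixed state, and no unitary (controlled or not) can map $I/2^{\log N}$ to $\ket{0}\bra{0}$; resetting a mixed register is exactly the kind of purification the one-clean-qubit model forbids. The only unitary version of your trick --- the one actually used in Example~\ref{ex:IP} and Theorem~\ref{c-c} --- is to \emph{check} whether the counter \emph{happened} to start at $0$ and otherwise output a coin flip. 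That succeeds only with probability $1/c$, so the bias would drop to $\Omega(\epsilon/(c\,2^k))$, which violates the claimed $\Omega(\epsilon/2^k)$ and would inflate the cost by a factor of $c^2$.

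The paper's route around this is the essential idea you are missing: it first wraps the (fixed-channel, one-qubit-measurement) protocol into a Hadamard-test / trace-estimation circuit in the style of \cite{Shor}, so that the acceptance probability equals $\frac12+\mathrm{Tr}(\mathcal{P}'')/2^{d+1}$, i.e.\ depends \emph{only on the trace} of the product of the round unitaries. Declocking is then done by conditioning each round unitary on a counter held in $\log r$ \emph{mixed} qubits that is merely incremented, never reset: a random starting offset $j$ just cyclically permutes the factors in the product, and by the cyclic invariance of the trace this leaves the acceptance probability unchanged. This single construction simultaneously handles the single-qubit measurement, the one-clean-qubit requirement (only the Hadamard-test control qubit must be clean), and the random counter offset, at the cost of a constant factor in bias and the $O(c\log c)$ communication you correctly budgeted. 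Without it, your declocking argument does not go through.
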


The proofs of Theorems \ref{c-c},\ref{strong} are in the appendix.
Applying Theorem \ref{strong} to Example \ref{ex:IP} gives the following.

\begin{corollary}
The semi-unclocked one-clean-qubit quantum communication complexity of $IP_2$ is $O(n\log n)$.
\end{corollary}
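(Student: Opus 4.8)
The plan is to invoke Theorem \ref{strong} directly on the clocked two-clean-qubit protocol $\hat{P}$ built in Example \ref{ex:IP}. First I would verify that $\hat{P}$ satisfies the hypotheses of Theorem \ref{strong}: being a clocked protocol, its communication channel is ghosted by definition; it computes $IP_2$ with communication $c = 2n$ and zero error, hence bias $\epsilon = \tfrac12$; and its final measurement is a measurement of a single designated qubit in the computational basis, which is in particular an arbitrary projective measurement with two outcomes. So $\hat{P}$ is a legitimate input to Theorem \ref{strong} with parameters $k = 2$, $c = 2n$, $\epsilon = \tfrac12$.

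Second, applying Theorem \ref{strong} produces a semi-unclocked one-clean-qubit protocol $\mathcal{P}_{IP_2}$ for $IP_2$ with a fixed communication channel, a one-qubit measurement, communication $O(c\log c) = O(n\log n)$, and bias $\Omega(\epsilon/2^k) = \Omega(\tfrac18) = \Omega(1)$. Third, I would read off the cost of $\mathcal{P}_{IP_2}$ in the sense of Definition \ref{compmeasure}: it is $\mathrm{communication}/\mathrm{bias}^2 = O(n\log n)/\Omega(1) = O(n\log n)$. Since the semi-unclocked one-clean-qubit communication complexity of $IP_2$ is the infimum of this quantity over all such protocols, it is at most $O(n\log n)$, which is the claim.

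There is essentially no obstacle here, as this is a direct corollary. The only point requiring a moment's care is that the constant hidden in the $\Omega(\cdot)$ bound on the bias — which in the proof of Theorem \ref{strong} comes from the trace-estimation step, where the acceptance probability gets shifted away from $\tfrac12$ and the bias shrinks by a fixed multiplicative factor — is an absolute constant, so that dividing by its square costs only a constant. This is immediate because $k = 2$ and $\epsilon = \tfrac12$ are fixed and do not grow with $n$. Optionally, one could instead first pass through Theorem \ref{c-c} to obtain a clocked one-clean-qubit protocol and then apply Theorem \ref{strong} with $k=1$, but applying Theorem \ref{strong} to the $2$-clean-qubit protocol directly is the shorter route and gives the same $O(n\log n)$ bound.
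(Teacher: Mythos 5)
Your proof is correct and follows exactly the paper's intended route: the paper derives this corollary by applying Theorem \ref{strong} to the clocked two-clean-qubit protocol of Example \ref{ex:IP} with $k=2$, $c=2n$, $\epsilon=\tfrac12$, yielding communication $O(n\log n)$ and constant bias, hence cost $O(n\log n)$. Your additional remark that the bias loss is an absolute constant (since $k$ and $\epsilon$ do not depend on $n$) is exactly the right point of care.
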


\section{One-way Complexity with One Clean Qubit}\label{ow}
\subsection{The Upper Bound on $Q_{[1]}^{A\rightarrow B}(f)$}
Let $Q_{[1]}^{A\rightarrow B}(f)$ denote the complexity measure of a one-way two-player one-clean-qubit protocol. We define a one-way two-player one-clean-qubit protocol as follows:

\begin{definition}[One-way two-player one-clean-qubit protocol]
    The computation in the one-way version of one-clean-qubit protocols starts with a single qubit in the clean state and the rest of the qubits in the totally mixed state. The first player applies her  unitary on an arbitrary number of qubits, sends some of the qubits to the next player who also applies his unitary on an arbitrary number of qubits, and does a measurement. The cost is defined as for general one-clean qubit protocols. This can be described by the figure below:
    \begin{figure}[h]
       \centering
       \includegraphics[width=8cm, height=3cm]{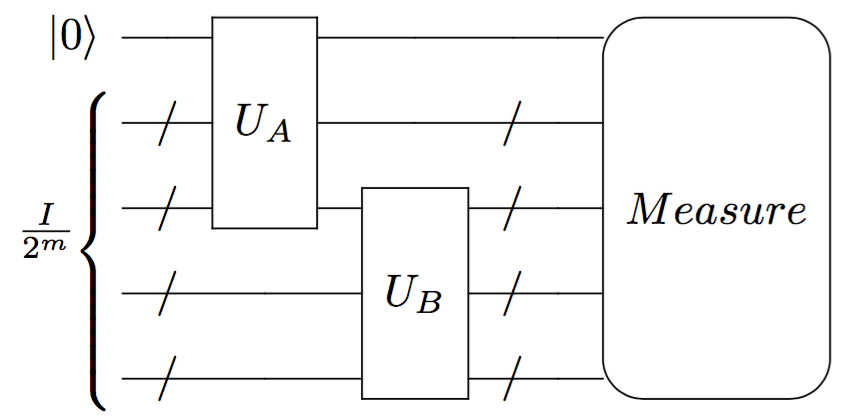}
       \caption{One-round one-clean-qubit protocol}
       \label{fig:my_label}
   \end{figure}

   Note that this type of protocol is semi-unclocked by definition.
\end{definition}

We show an upper bound in terms of the weakly unbounded-error communication complexity.

\begin{definition}[Weakly  unbounded-error  protocol, $PP$]\label{dfpp}
   In a weakly unbounded-error (randomized) protocol ($PP$ protocol),
   the function $f$ is computed correctly with probability greater than  $\frac{1}{2}$ by a classical private coin protocol. The cost of the protocol with a maximum error (over all inputs) of $\frac{1}{2}-\epsilon$ and a maximum communication of  $c$, is given by $PP(f)= c-\floor{\log{\epsilon}}$. \cite{klauck:qcclowerj}
\end{definition}

We show the following theorem for the upper bound on the communication complexity of the one-clean-qubit one-way protocols (in the appendix):
\begin{theorem}\label{theorem1}
   $Q_{[1]}^{A\to B}(f)\leq 2^{O(PP(f))}$.
\end{theorem}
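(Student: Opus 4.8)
The plan is to turn a $PP$ protocol for $f$ into a one-way one-clean-qubit protocol, paying an exponential blow-up in communication but only because the cost measure $Q_{[1]}$ already charges $1/\epsilon^2$ for the bias. Start from a private-coin weakly unbounded-error protocol $\mathcal{P}$ for $f$ with communication $c$ and error $\frac12-\epsilon$, so that $PP(f)=c-\lfloor\log\epsilon\rfloor$, i.e. $c=O(PP(f))$ and $\log(1/\epsilon)=O(PP(f))$. First I would recall the standard fact that a one-way $PP$ protocol suffices up to constant factors (or, if not one-way to begin with, absorb the rounds; but for the one-way upper bound the cleanest route is to use that $PP$ equals one-way $PP$ up to $O(\log n)$ and the sign-rank/discrepancy characterization is irrelevant here). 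Concretely, Alice's private randomness picks a message $a\in\{0,1\}^c$ with some probability $p_x(a)$, and Bob, on receiving $a$, outputs a bit that is correct with probability $\geq \frac12+\epsilon$ averaged over $a$.

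The key step is to simulate this using one clean qubit and $O(c)$ communicated qubits. The idea: use the $m$ totally mixed qubits of the one-clean-qubit state as a source of uniform randomness. Alice reserves $c$ of the mixed qubits to play the role of her private random string $a$ — since they are in the totally mixed state, measuring (or coherently using) them is equivalent to sampling $a$ uniformly. To reproduce a \emph{non-uniform} distribution $p_x(a)$ from uniform bits one blows the number of mixed qubits up to $\mathrm{poly}$ in the precision; alternatively, and more simply, first massage $\mathcal{P}$ into a protocol whose private randomness is uniform over $\{0,1\}^{c'}$ with $c'=O(c+\log(1/\epsilon))=O(PP(f))$ (standard rounding of the probabilities, which changes the error by at most, say, $\epsilon/2$). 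Then Alice's "unitary" simply copies the relevant mixed qubits into the message register (a CNOT pattern), and Bob's unitary computes his decision bit $g(a,y)$ into the clean qubit; the final projective measurement of the clean qubit then accepts with probability exactly $\Pr_a[g(a,y)=1]$, which is $\geq \frac12+\epsilon$ on $1$-inputs and $\leq\frac12-\epsilon$ on $0$-inputs (or $\frac12\pm\epsilon$ around $p=\frac12$, matching Definition \ref{compmeasure} with $p=\tfrac12$). Hence we get a one-way one-clean-qubit protocol with communication $c'=O(PP(f))$ and bias $\Omega(\epsilon)$, so its cost is $c'/\epsilon^2 = O(PP(f))\cdot 2^{O(PP(f))} = 2^{O(PP(f))}$, which is the claimed bound.

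I expect the main obstacle to be exactly the point where the classical protocol uses \emph{private} randomness with a possibly non-uniform and input-dependent distribution, whereas the one-clean-qubit model hands you only a fixed supply of uniformly-totally-mixed qubits and a single clean qubit, and crucially does \emph{not} let you measure until the end. One must argue (i) that the biased private coins can be replaced by a polynomially (in $1/\epsilon$, hence exponentially in $PP(f)$, which is still fine since cost is already $2^{O(PP(f))}$) longer uniform string with only a small loss in bias, and (ii) that Bob's decision — which classically may involve looking at his coins, Alice's message, and $y$ — can be carried out coherently, writing the single output bit into the clean qubit without any intermediate measurement and without needing extra clean workspace (this is where one uses that Bob may apply an arbitrary unitary on arbitrarily many mixed qubits, so garbage can be dumped into mixed registers). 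A secondary, more bookkeeping-level obstacle is confirming that the "$p>0$ constant" in Definition \ref{compmeasure} can be taken to be $\tfrac12$ here, and that the one-wayness of the resulting protocol is genuine (Alice speaks once, Bob measures), which it is by construction since we never send anything back.
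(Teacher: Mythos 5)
There is a genuine gap, and it is not one of the two obstacles you flag at the end. The fatal step is ``Alice's unitary simply copies the relevant mixed qubits into the message register \ldots the final projective measurement of the clean qubit then accepts with probability exactly $\Pr_a[g(a,y)=1]$''. In the one-clean-qubit model \emph{every} register other than the single clean qubit starts totally mixed, and a unitary acting only on totally mixed qubits leaves them totally mixed: if Alice routes her random string $a$ (or any function $M(x,a)$ XORed into further mixed qubits) into the communicated register, Bob receives a state whose reduced density matrix is $I/2^{c'}$, independent of $x$. His acceptance probability then depends only on $y$, so the protocol computes nothing. More fundamentally, a one-way one-clean-qubit protocol with communication $O(PP(f))$ and bias $\Omega(\epsilon)$ cannot exist in general: the paper's own lower bound (Theorem \ref{th:owl}) shows the bias of \emph{any} one-way one-clean-qubit protocol is at most $O(m(f))=O(disc(f))=2^{-\Omega(PP(f))+O(\log n)}$, regardless of communication, so for $IP_2$ (say, with a constant-bias linear-communication $PP$ protocol) your construction would contradict that bound. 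The issues you do raise --- derandomizing non-uniform private coins and computing Bob's decision coherently --- are comparatively minor bookkeeping.

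The paper's actual proof embraces the exponential bias loss rather than avoiding it: the single clean qubit is used as a \emph{flag} that Alice sets to $\ket{1}$ exactly on the component of the mixture where the $c$ mixed message qubits happen to equal the correct classical message $T(x)$ (an event of probability $2^{-c}$). Bob applies the classical protocol's decision conditioned on the flag being $1$ and outputs a fresh coin otherwise, giving total acceptance probability $\frac12+\frac{\epsilon}{2^c}$, i.e.\ bias $\epsilon/2^c$, and hence cost $(c+1)\cdot 2^{2c}/\epsilon^2\leq 2^{O(PP(f))}$. The correlation between the clean flag qubit and the mixed register is the only channel through which $x$ enters the state, and the $2^{-c}$ degradation of the bias is precisely why the theorem reads $2^{O(PP(f))}$ rather than $\mathrm{poly}(PP(f))/\epsilon^2$. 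Your final arithmetic happens to land on the same number, but only because you divide by $\epsilon^2$; the protocol it is attached to does not compute $f$.
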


\subsection{The Lower Bound on $Q_{[1]}^{A\rightarrow B}(f)$}


\begin{theorem}\label{th:owl}
   For all $f:\{0,1\}^n\times \{0,1\}^n \to \{0,1\}$ we have $Q_{[1]}^{A\rightarrow B}(f)\geq 2^{\Omega(PP(f))-O(\log{n})}$.
\end{theorem}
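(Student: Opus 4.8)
The plan is to prove the contrapositive-style bound: take any one-way one-clean-qubit protocol $\mathcal P$ for $f$ with communication $c$ and bias $\epsilon$, and build from it a classical $PP$-protocol for $f$ whose $PP$-cost is at most $O(\log(c/\epsilon^2)) + O(\log n)$. Then $PP(f) \le O(\log Q_{[1]}^{A\to B}(f)) + O(\log n)$, which rearranges to the claimed $Q_{[1]}^{A\to B}(f) \ge 2^{\Omega(PP(f)) - O(\log n)}$. So the real content is: a one-way quantum protocol with one clean qubit, $c$ qubits of communication, and bias $\epsilon$ can be simulated by a classical private-coin protocol communicating $O(c)$ bits with bias $2^{-O(c)}\epsilon$ (or something of that flavor, so that $PP$-cost is $O(c) + O(\log(1/\epsilon))$, hence $O(\log(c/\epsilon^2))$).

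First I would set up the state of the protocol carefully. In a one-way one-clean-qubit protocol, Alice holds a clean qubit $\ket 0$ plus some totally mixed qubits, applies $U_A^x$, sends $c$ qubits to Bob; Bob appends his own mixed workspace, applies $U_B^y$, and measures. The acceptance probability is $\mathrm{Tr}\big[\Pi\, U_B^y (\sigma_x \otimes I/2^{m_B}) U_B^{y\dagger}\big]$ where $\sigma_x$ is the reduced state on the $c$ communicated qubits after Alice's unitary acting on $\ket0\bra0 \otimes I/2^{m_A}$. The key observation is that $\sigma_x$ is a $2^c \times 2^c$ density matrix, and the acceptance probability is an affine (real, bounded in $[0,1]$) function of the $4^c$ real entries of $\sigma_x$ — call it $\langle A_x, B_y\rangle$ for suitable matrices depending only on Alice's and Bob's inputs respectively (with $\|A_x\|, \|B_y\|$ controlled). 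This is exactly the kind of "acceptance probability is a bounded bilinear/inner-product form with short classical descriptions" situation that yields $PP$ upper bounds: one names a random coordinate (or uses the standard randomized-rounding-of-a-sign trick from the characterization of $PP$ by approximate-rank / sign-rank-type arguments, e.g.\ Paturi–Simon style) to reduce an inner product of low-dimensional vectors to a single-bit protocol with tiny bias. The matrix dimension is $4^c$, so the naive randomized protocol guessing a coordinate and outputting a $\pm1$ sample costs $O(c)$ bits of communication and achieves bias $\approx \epsilon/\mathrm{poly}(2^c)$, giving $PP$-cost $O(c) + O(\log(1/\epsilon)) + O(\log c) = O(\log(c/\epsilon^2))$.

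Then I would bound things in terms of the cost measure. Since $Q_{[1]}^{A\to B}(f) = \inf_{\mathcal P} c(\mathcal P)/\epsilon(\mathcal P)^2$, for the optimal protocol both $c$ and $1/\epsilon^2$ are at most $Q := Q_{[1]}^{A\to B}(f)$, so $c/\epsilon^2 \le Q$ forces $c \le Q$ and $1/\epsilon^2 \le Q$; hence $\log(c/\epsilon^2) \le \log Q$. Combined with the previous paragraph, $PP(f) \le O(\log Q) + O(\log n)$, i.e.\ $\log Q \ge \Omega(PP(f)) - O(\log n)$, i.e.\ $Q \ge 2^{\Omega(PP(f)) - O(\log n)}$, as desired. (The $O(\log n)$ slack is there precisely to absorb the additive $\log c$, $\log$ of the dimension, and the constant losses in the $PP$-cost bookkeeping — since $c$ can be taken polynomial in $n$ without loss, $\log c = O(\log n)$.)

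The main obstacle, and the step I'd spend the most care on, is the quantum-to-$PP$ simulation with the right parameter tracking: showing that a quantum one-way message on $c$ qubits, carrying only the information in the $4^c$-entry matrix $\sigma_x$, can be replaced by a classical private-coin message of $O(c)$ bits so that Bob can reconstruct a Bernoulli random variable whose expectation equals (or $\epsilon$-approximates) the quantum acceptance probability. One clean way: have Alice and Bob each sample, from their local descriptions, a random index into the $4^c$-dimensional representation and random signs, in the spirit of the standard fact that $PP(f) = \Theta(\log \mathrm{rk}_\pm$-type quantity$)$ and that bounded bilinear forms with $d$-dimensional factors give $PP$-protocols of cost $O(\log d)$; here $d = 4^c$ so cost $O(c)$. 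I must make sure the clean-qubit structure doesn't secretly let the message dimension exceed $2^c$ (it cannot — only $c$ qubits cross the channel) and that Bob's post-processing, including his mixed ancillas and final projective measurement, only contributes to the $B_y$ side of the inner product without blowing up norms. Handling the normalization so the resulting single-bit protocol has bias $\ge \epsilon / 2^{O(c)}$ rather than something worse is the delicate bookkeeping that everything rests on.
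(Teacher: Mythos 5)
There is a genuine gap in your cost accounting, and it sits exactly at the step you yourself flag as delicate. Your simulation replaces the $c$-qubit quantum message by a classical message of $O(c)$ bits and degrades the bias to $\epsilon\cdot 2^{-O(c)}$; by Definition \ref{dfpp} the resulting $PP$-cost is $O(c)+\log\left(1/(\epsilon\, 2^{-O(c)})\right)=O(c)+O(\log(1/\epsilon))$, which is \emph{not} $O(\log(c/\epsilon^2))$ --- you have silently replaced an additive $O(c)$ by $O(\log c)$. This matters because the theorem must hold for protocols with arbitrarily large communication: the cost measure is $c/\epsilon^2$, so the optimal protocol may have $c$ comparable to $Q_{[1]}^{A\rightarrow B}(f)$ itself, and your chain then only yields $PP(f)\le O(Q_{[1]}^{A\rightarrow B}(f))+O(\log n)$, exponentially weaker than the needed $PP(f)\le O(\log Q_{[1]}^{A\rightarrow B}(f))+O(\log n)$. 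No classical simulation that pays per communicated qubit can close this; as the paper remarks, the whole point of the theorem is that the bound is independent of how much the quantum protocol communicates.

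The paper's proof avoids communication entirely by bounding the achievable \emph{bias}. After normalizing so that one qubit is measured, the acceptance probability is $\frac{1}{2}+\frac{tr(B_yA_x)}{2^{m+1}}$, where $A_x$ and $B_y$ are the players' unitaries (tensored with identities) on all $m+1$ qubits; viewing these as vectors of norm $2^{m/2}$, the bias is a normalized inner product whose sign computes $f$, hence $2\epsilon\le m(M_f)$, the margin complexity --- and margin allows vectors of any dimension, which is why private workspace and message length are irrelevant. Then $Q_{[1]}^{A\rightarrow B}(f)\ge 1/\epsilon^2\ge 4/m(M_f)^2$, and the known relations $disc(f)=\Theta(m(M_f))$ and $PP(f)\le O(\log(1/disc(f))+\log n)$ finish the proof. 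Your observation that acceptance is a bounded inner-product form with controlled norms is the right starting point; the fix is to stop there and invoke the margin/discrepancy characterization of $PP$ directly, rather than building an explicit coordinate-sampling protocol whose cost scales with $c$.
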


The proof relies only on the fact that an efficient one-way one-clean-qubit protocol needs to achieve a large enough bias.
The communication needed to do so is immaterial for our lower bound, which is quite interesting. In other words, there is a threshold to the bias which simply cannot be passed even if we allow more qubits to be sent. This is in sharp contrast to many common modes of communication with error.

The bound on the achievable bias comes from margin complexity, an important concept in learning theory \cite{margin}. The proof is in the appendix.

\section{The Trivial Lower Bound on $Q_{[1]}(f)$}
The lower bound on the two-way one-clean-qubit communication complexity $Q_{[1]}(f)\geq \Omega(Q(f))$ is trivial since one-clean-qubit protocols can be turned into standard quantum protocols at their cost. In Appendix \ref{ap:triv} we discuss this lower bound for some well-known functions.

\section{Hardness of Classically Simulating the One-Clean-Qubit Model}\label{r2}

We now turn to {\em simulations} of quantum protocols with the one-clean-qubit restriction by randomized protocols. The most demanding definition of simulating a quantum protocol by a randomized protocol is that the randomized protocol must replicate the acceptance probabilities of a given quantum protocol on all inputs, up to some additive error\footnote{We only consider additive error.}.

Our weaker definition of an $\epsilon$-error simulation is:
\begin{definition}[$\epsilon$-error simulation of a quantum protocol]
   Given a quantum protocol $\mathcal{P}$ for a function $f:X\times Y\to \{0, 1\}$ such that for all inputs $(x, y)\in X\times Y$ , $\mathcal{P}$ accepts 1-inputs with probability at least $\alpha$ and accepts 0-inputs with probability at most $\beta$. A classical simulation of $\mathcal{P}$ with additive error of $\epsilon$ is one that accepts 1-inputs with probability at least $\alpha -\epsilon$ and accepts 0-inputs with probability at most $\beta +\epsilon$.
\end{definition}
\begin{remark}
   The above definition is nontrivial only if $\alpha-\epsilon > \beta + \epsilon$.
\end{remark}

\subsection{Simulating the One-Clean-Qubit Model with Polynomially Small Additive Error}
We show the following lemma (see the appendix):
\begin{lemma}\label{Lemma1}
   Given any two-round (Alice $\rightarrow$ Bob $\rightarrow$ Alice) $k$-clean-qubit quantum protocol (with communication $2k$ and where both messages contain only the $k$ clean qubits) for a function $f$ that accepts 0-inputs with probability at most $q$ and accepts 1-inputs with probability at least $p$, there exists a two-round one-clean qubit protocol (with communication $2k$) for the same function that accepts 0-inputs with probability at most $\frac{q}{2^k}$ and accepts 1-inputs with probability at least $\frac{p}{2^k}$.
\end{lemma}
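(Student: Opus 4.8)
The plan is to adapt the construction behind Theorem~\ref{c-c}, with one crucial modification: on the ``bad'' branch we \emph{reject outright} instead of performing a coin toss. This is exactly what turns the $1/2^k$ loss in \emph{bias} (as in Theorem~\ref{c-c}) into a $1/2^k$ loss in the \emph{acceptance probabilities} themselves, which is what the lemma asks for.

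Concretely, let $\mathcal{P}$ be the given two-round $k$-clean-qubit protocol, with clean register $C$ (the $k$ clean qubits, which by hypothesis is also the entire content of both messages) and mixed register $M$ (the $m$ mixed qubits, split between Alice and Bob). I build a one-clean-qubit protocol $\mathcal{P}'$ as follows. Alice holds the single clean qubit $b$; reserve $k$ of the mixed qubits as a register $C'$ on Alice's side (this will be the message register) and enough further mixed qubits to play the role of $M$. First Alice applies the unitary $A$ that flips $b$ iff $C'=\ket{0^k}$; then Alice, Bob, Alice run exactly the unitaries of $\mathcal{P}$ on the registers $C',M$, with $C'$ serving as the (two) messages and $b$ left untouched throughout. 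The total communication is two messages of $k$ qubits, i.e.\ $2k$. Finally Alice measures the projector $\ket{1}\bra{1}_b\otimes\Pi_{\mathrm{acc}}$, where $\Pi_{\mathrm{acc}}$ is the accepting projector of $\mathcal{P}$; this does not depend on the inputs, and $\mathcal{P}'$ uses a single clean qubit, so it is a legal one-clean-qubit protocol.

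The analysis rests on a single observation: the global state of $\mathcal{P}'$ is block-diagonal in the value of $b$ at every step. Writing $\frac{I_{C'}}{2^k}=\frac{1}{2^k}\sum_{z}\ket{z}\bra{z}$, applying $A$ to the initial state $\ket{0}\bra{0}_b\otimes\frac{I_{C'}}{2^k}\otimes\frac{I_M}{2^m}$ yields the mixture $\frac{1}{2^k}\ket{1}\bra{1}_b\otimes\ket{0^k}\bra{0^k}_{C'}\otimes\frac{I_M}{2^m}+(1-\frac{1}{2^k})\ket{0}\bra{0}_b\otimes\tau$ for some state $\tau$ of $C'M$; since every later operation acts only on $C'M$ (it is of the form $W\otimes I_b$) it never creates coherence across $b$, so the block structure persists. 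Hence the acceptance probability of $\mathcal{P}'$ equals $\frac{1}{2^k}$ times the probability that $\Pi_{\mathrm{acc}}$ accepts after $\mathcal{P}$'s unitaries are applied to the $b=\ket{1}$ block. But in that block $C'=\ket{0^k}$ and $M$ is still fully mixed, i.e.\ $C'M$ is in precisely the initial state of the $k$-clean-qubit protocol $\mathcal{P}$, so this probability is exactly $\mathcal{P}$'s acceptance probability. Therefore $\mathcal{P}'$ accepts every input with exactly $1/2^k$ times the probability that $\mathcal{P}$ does, giving at most $q/2^k$ on $0$-inputs and at least $p/2^k$ on $1$-inputs.

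The only genuinely delicate point is this no-interference claim: one must verify that $A$ produces an honest classical mixture over $b$ rather than a superposition, and that because $b$ is a spectator for the remainder of the protocol this mixture is never recombined, which is what makes conditioning on $b=\ket{1}$ legitimate and lets us import the behaviour of $\mathcal{P}$ verbatim on that branch. Everything else --- the round structure, the communication count $2k$ (here we use that the messages of $\mathcal{P}$ contain only the $k$ clean qubits), and the fact that exactly one clean qubit is used --- is routine bookkeeping.
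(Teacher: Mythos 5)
Your proposal is correct and follows essentially the same route as the paper's proof: the single clean qubit is used as a flag that is set exactly when the $k$ mixed message qubits happen to coincide (with probability $2^{-k}$) with the clean starting state of the original protocol, the original unitaries are then run, and acceptance is conditioned on the flag, scaling both acceptance probabilities by $2^{-k}$. The only cosmetic difference is that the paper folds the flag-setting into Alice's first unitary by flipping on the state $\ket{\phi_x}$ in a basis containing it, whereas you flip on $\ket{0^k}$ and then apply the original first unitary; these yield identical states.
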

\begin{theorem}\label{long}
   In communication complexity, there exists a function $f:\{0, 1\}^n\times\{0, 1\}^n\to\{0, 1\}$ and a one-clean-qubit quantum protocol $\mathcal{P}$ with communication $O(\log{n})$  such that simulating $\mathcal{P}$ classically  with an allowance of $\frac{1}{n^4}$ additive error requires $\Theta(n)$ communication.
\end{theorem}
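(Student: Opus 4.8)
The plan is to instantiate Theorem~\ref{long} with a set-disjointness-type function --- the $MIDDLE$ problem, which I will treat as $\mathrm{DISJ}$ restricted to the standard unique-intersection promise ($|x\cap y|\le 1$, $|x|=|y|=n/4$) and extended to a total function off the promise --- and to exploit that the rectangle/corruption lower bound for disjointness carries only an \emph{exponentially small} additive slack, so that even a $1/\mathrm{poly}(n)$ acceptance gap cannot be reproduced below linear communication. First I would write down a $k$-clean-qubit protocol with $k=\lceil\log n\rceil$: Alice maps $x$ to the uniform superposition $\frac{1}{\sqrt{|x|}}\sum_{i\in x}\ket i$ over the $k$ clean qubits and sends them; Bob applies a permutation unitary $V_y$ that moves the indices in $y$ into a fixed block of size $n/4$ and sends the qubits back; Alice measures whether the index lies in that block. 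The acceptance probability is exactly $|x\cap y|/|x|$, i.e. $0$ on disjoint pairs and $4/n$ on pairs with a (unique) intersection. This is a two-round protocol whose two messages carry only the $k$ clean qubits, so Lemma~\ref{Lemma1} applies with $k=\lceil\log n\rceil$ and produces a one-clean-qubit protocol $\mathcal P$ with communication $2k=O(\log n)$ that accepts $0$-inputs with probability $0$ and $1$-inputs with probability at least $(4/n)\cdot 2^{-k}\ge 2/n^2$. Thus $\mathcal P$ has $\alpha\ge 2/n^2$ and $\beta=0$, so $\alpha-\beta>2/n^4$ and demanding a $1/n^4$-error simulation is non-vacuous.

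\textbf{The classical $\Omega(n)$ lower bound.} Any classical $1/n^4$-error simulation of $\mathcal P$ accepts every $1$-input with probability $\ge \alpha-1/n^4\ge 1/n^2$ and every $0$-input with probability $\le 1/n^4$. Fix Razborov's hard distribution $\mu$ for disjointness (supported on the promise inputs, both classes of constant $\mu$-weight). Averaging over the simulation's random string, I get a deterministic $c$-bit protocol with $\Pr_\mu[\mathrm{acc}\wedge f=1]-\tfrac12\Pr_\mu[\mathrm{acc}\wedge f=0]$ at least $\Omega(1/n^2)$. Such a protocol partitions $X\times Y$ into at most $2^c$ rectangles, and Razborov's corruption lemma gives, for every rectangle $R$, $\mu(R\cap f^{-1}(1))\le \tfrac12\mu(R\cap f^{-1}(0))+2^{-\Omega(n)}$. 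Summing this over the accepting rectangles yields $\Omega(1/n^2)\le 2^c\cdot 2^{-\Omega(n)}$, hence $c\ge \Omega(n)-O(\log n)=\Omega(n)$. The decisive point --- and the reason polynomially small additive error is enough here, unlike the $1/\exp(n)$ in the circuit-complexity results --- is that the slack $2^{-\Omega(n)}$ is far below the $1/\mathrm{poly}(n)$ gap, so the gap cannot be ``spent'' across subexponentially many rectangles.

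\textbf{Matching upper bound and the main obstacle.} For the $O(n)$ side, $O(n)$ bits of classical communication compute $MIDDLE$ exactly and hence reproduce $\mathcal P$'s acceptance probabilities exactly (a fortiori within $1/n^4$), so the simulation complexity is $\Theta(n)$. The main obstacle is the lower bound step: one needs the corruption bound for the chosen function in exactly this quantitative form --- a linear lower bound on rectangles heavy on $1$-inputs, with only an exponentially small additive correction, valid under a distribution supported on inputs obeying the promise ($|x\cap y|\le1$, $|x|=|y|=n/4$) that the quantum protocol of the first step requires. This is essentially Razborov's disjointness argument, but checking that it transfers to $MIDDLE$ and still bites in the polynomially-small-gap regime is the crux; by comparison, converting the natural one-round quantum protocol into the two-round, clean-qubits-only shape demanded by Lemma~\ref{Lemma1}, and tracking the $2^{-k}=1/\mathrm{poly}(n)$ loss there, is routine.
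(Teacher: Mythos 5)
Your high-level framework (convert a $\lceil\log n\rceil$-clean-qubit two-round protocol into a one-clean-qubit protocol via Lemma~\ref{Lemma1}, then play a $1/\mathrm{poly}(n)$ acceptance gap against the exponentially small additive slack in Razborov's rectangle bound) is the same as the paper's, and your $O(n)$ upper bound is fine. But your construction is oriented the wrong way around, and this is fatal. Your quantum protocol accepts with probability $|x\cap y|/|x|$, i.e.\ it accepts the \emph{intersecting} pairs and rejects disjoint pairs with certainty. Razborov's corruption lemma is one-sided: a rectangle carrying non-negligible mass of \emph{disjoint} pairs must carry a proportional mass of uniquely-intersecting pairs (up to a $2^{-\Omega(n)}$ additive term). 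The reversed inequality you invoke, $\mu(R\cap f^{-1}(1))\le\tfrac12\mu(R\cap f^{-1}(0))+2^{-\Omega(n)}$ with $f^{-1}(1)$ the intersecting pairs, is false: the rectangle $R=\{x:1\in x\}\times\{y:1\in y\}$ contains no disjoint pair at all, yet under Razborov's distribution it has mass $\Theta(1/n)$ on the uniquely-intersecting pairs whose common element is $1$, so the additive term would have to be $\Omega(1/n)$ rather than $2^{-\Omega(n)}$. Worse, this is not merely a broken inequality in an otherwise sound argument: your protocol genuinely admits an $O(\log n)$-bit classical $1/n^4$-error simulation. Alice sends a uniformly random $i\in x$ and Bob accepts iff $i\in y$; this accepts every $1$-input with probability $|x\cap y|/|x|\ge 4/n\ge\alpha-1/n^4$ and every $0$-input with probability $0\le\beta+1/n^4$. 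Hence no $\Omega(n)$ simulation lower bound can hold for your choice of $f$ and $\mathcal P$.

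This asymmetry is exactly what the paper's $MIDDLE$ function is built to exploit. There the acceptance amplitude is the signed sum $\frac1n\sum_i(-1)^{x_iy_i}=(n-2\sum_i x_iy_i)/n$, so the protocol rejects with certainty when $\sum_i x_iy_i=n/2$ (which, after padding $n/2-1$ coordinates with $1$'s, corresponds to the uniquely-intersecting instances on the live coordinates) and accepts with probability $\Theta(1/n^2)$ when $\sum_i x_iy_i=n/2-1$ (the disjoint instances). The accepting side is therefore the \emph{disjoint} side --- the side that is hard to certify classically and the side on which the corruption bound bites --- and the interference effect (acceptance probability equal to the square of a signed sum, rather than to a classical sampling probability as in your protocol) is the actual source of the hardness. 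To repair your argument you would need to replace your protocol by one of this type; no corruption-style bound can rescue the ``accept on intersection'' version.
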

\begin{proof}
   Consider the function below:
   $$MIDDLE(x, y)=0\Leftrightarrow\displaystyle\sum_i x_i y_i=\frac{n}{2},\hspace{1mm}MIDDLE(x, y)=1\Leftrightarrow\displaystyle\sum_i x_i y_i\neq\frac{n}{2},$$
   where $x, y\in\{0, 1\}^n$. With Lemma \ref{Lemma1} in mind, we design a standard quantum protocol first. We would like to compute the state $\frac{1}{\sqrt{n}}\sum_{i=1}^n(-1)^{x_iy_i}\ket{i}$. This can be done by executing the following quantum protocol $\mathcal{P}$:
\begin{enumerate}
    \item Alice prepares the state $\frac{1}{\sqrt{n}}\sum_{i=1}^n\ket{i}\ket{x_i}$ and sends it to Bob.
    \item Bob applies his unitary, which maps the state he received from Alice to $\frac{1}{\sqrt{n}}(-1)^{x_i y_i}\ket{i}\ket{x_i}$ and sends the result to Alice.
    \item  Alice XORs the last qubit with $x_i$ and then traces out that qubit to obtain $\frac{1}{\sqrt{n}}(-1)^{x_i y_i}\ket{i}$, applies a Hadamard transformation and does a complete measurement in the computational basis. The protocol outputs 1 if it measures the all-zero string and outputs 0 otherwise.
\end{enumerate}
This protocol requires $2\log{n}+2$ communication and uses $\log{n}+1$ clean qubits. Finally, we transform the above protocol into a one-clean-qubit protocol according to Lemma \ref{Lemma1}.

Now we compute the acceptance probabilities of the standard quantum protocol above:
\begin{align}\label{ip}
   \begin{split}
       & \braket{H(\frac{1}{\sqrt n}\sum_{i=1}^n\ket{i}(-1)^{x_iy_i})|\ket{00\cdots 0}}\\
       & = \braket{\frac{1}{\sqrt{n}}\sum_{i=1}^n\ket{i}(-1)^{x_iy_i}|H(\ket{00\cdots 0})}\\
     & = \braket{\frac{1}{\sqrt{n}}\sum_{i=1}^n\ket{i}(-1)^{x_iy_i}|\frac{1}{\sqrt{n}}\sum_{i=1}^n\ket{i}}.
   \end{split}
\end{align}

For the case where $\braket{x, y}=\sum_{i=1}^nx_iy_i=\frac{n}{2}$, we have $\frac{n}{2}$ 0's and $\frac{n}{2}$ 1's among the $x_iy_i$ and hence, (\ref{ip}) for this case equals to zero, which implies that the protocol rejects 0-inputs with certainty.

For the case where $\braket{x, y}=\sum_{i=1}^nx_iy_i=\frac{n}{2}+t$, we have $\frac{n}{2}-t$ 0's and $\frac{n}{2}+t$ 1's and hence, the amplitude from (\ref{ip}) is
$$\frac{1}{\sqrt{n}}\cdot\big(\frac{n}{2}+t-(\frac{n}{2}-t)\big)\frac{1}{\sqrt{n}}=\frac{2t}{n},$$
which implies an acceptance probability of $(\frac{2t}{n})^2=\frac{4t^2}{n^2}$.

Notice that the gap between 0- and 1-inputs is $\frac{4t^2}{n^2}$. Now, simulating $\mathcal{P}$ using only one clean qubit does not change the communication but reduces the acceptance probability of 1-inputs from $\frac{4t^2}{n^2}$ to $\frac{2t^2}{n^3}$ and does not change the acceptance probability of 0-inputs. The gap between the acceptance probability of 0-inputs and 1-inputs is now $\frac{2t^2}{n^3}-0=\frac{2t^2}{n^3}$.

We will focus on the 1-inputs with $t=-1$.

We then show that classically simulating the one-clean-qubit protocol with $\frac{1}{n^4}$ additive error for the function $MIDDLE(x, y)$ requires $\Omega(n)$ communication. For this, we use Razborov's analysis of the rectangle bound for the Disjointness problem\cite{recbound} together with a reduction and the fact that the rectangle bound is not sensitive to acceptance probabilities being small. This shows that any classical protocol that simulates the above quantum protocol within additive error $1/n^4$ needs communication $\Omega(n)$.   Details are in Appendix \ref{app:low}.

   \end{proof}
\subsection{Simulating the One-Clean-Qubit Model with  Constant Additive Error}
Previous results about the hardness of simulating the one-clean-qubit model (in circuit complexity) require the additive simulation error to be exponentially small. In the previous subsection we have shown that in communication complexity additive error $1/poly(n)$ is already enough to give a separation (which is also not based on unproven assumptions). Here we consider pushing this even further: can the one-clean-qubit model be simulated classically with constant additive error?

Showing hardness of a classical simulation with constant additive error is equivalent to showing a separation between $Q_{[1]}(f)$ and $R(f)$: regarding both complexity measures efficient error reduction is possible\footnote{We defined $Q_{[1]}$ so.}. And showing hardness of a simulation of a quantum protocol for $f$ within a small constant error means showing $R(f)$ is large.

 The strength of the one-clean-qubit model is trace-estimation. Any communication-like unitary can have its trace estimated by a quantum protocol with only one clean qubit (compare the proof of Theorem \ref{strong}). So we look for a hard problem along those lines. A one-way quantum protocol is not a good choice, since the trace of the product of unitaries applied by Alice and Bob is a vector inner product and can be estimated well by known randomized protocols with small error, if the gap of acceptance between one-inputs and zero-inputs is large \cite{knr:rand1round}. So we look beyond protocols with one round.

For technical reasons (cyclic property of matrix trace), looking for the simplest problem that should exhibit a separation
 we consider the three-player number-in-hand model\footnote{In the three-player number-in-hand model, each player sees only their own input.}.

We conjecture the following:
\begin{conj}
       There exists a function $f$ and a one-clean-qubit quantum protocol $\mathcal{P}$ that computes $f$ exactly with communication $O(\log{n})$  such that simulating $\mathcal{P}$ classically  with an allowance of constant  additive error requires $\Omega(\sqrt{n})$ communication.
\end{conj}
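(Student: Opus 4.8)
The plan is to exhibit a three-player number-in-hand function $ABC$ together with a one-clean-qubit quantum protocol of the trace-estimation type, and then to reduce the randomized lower bound to a clean statement in high-dimensional geometry.

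On the quantum side, I would design $ABC$ so that its Boolean value is read off from whether a cyclic product of ``communication-like'' unitaries $U_A^a U_B^b U_C^c$ (player $P$ applying $U_P$ depending only on their own input) has trace zero or trace bounded away from zero. Exactly as in the proof of Theorem~\ref{strong}, such a normalized trace $\mathrm{tr}(U_A^a U_B^b U_C^c)/2^m$ can be estimated by a one-clean-qubit protocol whose communication is only $O(m')$, where $m'$ is the number of qubits actually sent around the cycle Alice $\to$ Bob $\to$ Charlie $\to$ Alice; the abundant mixed qubits are never communicated. Cyclicity of the trace is precisely why three players are the natural setting: a two-party (one-round) analogue collapses to estimating a vector inner product, which is easy for classical protocols by \cite{knr:rand1round}. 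Arranging the $U_P$ to act on $O(\log n)$ communicated qubits yields the $O(\log n)$-communication exact quantum protocol, and composing with the clean-qubit construction (Theorems~\ref{c-c} and~\ref{strong}) gives a one-clean-qubit protocol for $ABC$ at the cost of shrinking the bias by a constant factor.

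On the classical side, by the definition of $Q_{[1]}$ and of $\epsilon$-error simulation, a classical simulation within constant additive error is just a constant-error randomized protocol for $ABC$, so the task is to show $R(ABC)=\Omega(\sqrt n)$ in the three-party number-in-hand model. This is the content of the geometrical conjecture: a classical protocol can only extract a bounded number of linear functionals of one player's high-dimensional object per message, whereas the trilinear quantity $\mathrm{tr}(U_A^a U_B^b U_C^c)$ is inherently ``high rank''. I would pick hard instances encoding vectors/matrices drawn from a spherical or Gaussian ensemble in dimension $\approx n$, so that the trilinear form concentrates and the signal a $c$-bit protocol can see is on the order of $2^{O(c)}/\sqrt n$; a discrepancy- or corruption-type argument against a suitably chosen hard distribution on $2^{O(c)}$-many inputs should then force $c=\Omega(\sqrt n)$. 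This threshold is also why only $\sqrt n$ is conjectured rather than $\Omega(n)$: a matching $O(\sqrt n)$ randomized upper bound, via dimension reduction / coordinate sampling, is the companion bound referred to in Section~\ref{results}.

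The main obstacle is the geometrical conjecture itself: proving an $\Omega(\sqrt n)$ randomized lower bound for a three-party number-in-hand problem in which the acceptance gap is only $1/\mathrm{poly}(n)$. Standard information-complexity and corruption bounds are calibrated to constant-gap problems and must be replaced by arguments insensitive to tiny probabilities, in the spirit of the rectangle/LP-duality analysis (via Razborov's distribution) used in the proof of Theorem~\ref{long}, but now for three parties and with a geometric rather than set-theoretic core. A secondary difficulty is choosing $ABC$ so that it simultaneously admits the exact low-communication trace-estimation quantum protocol and has combinatorial structure clean enough for the hard distribution and rectangle analysis to go through; there is real tension between the first requirement, which pushes toward very structured unitaries, and the second, which wants generic, random-looking instances.
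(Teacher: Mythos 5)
Your plan coincides with the paper's own treatment of this statement. The paper's candidate is exactly such a three-player number-in-hand trace problem: $ABC(A,B,C)=1\iff ABC=I$ and $ABC(A,B,C)=0\iff ABC=-I$ for $A,B,C\in O_n$, solved \emph{exactly} with $O(\log n)$ communication by a Hadamard--controlled-$A$--controlled-$B$--controlled-$C$--Hadamard circuit on one clean qubit plus $\log n$ mixed qubits used as a catalyst; the matching upper bound $R(ABC)\leq O(\sqrt n)$ of Theorem \ref{thm:ABC} is proved by the same dimension-reduction/inner-product-estimation route you sketch (shared random unit vectors, $k=\sqrt n$). But be aware that the statement under review is a \emph{Conjecture} in the paper, and neither your proposal nor the paper establishes it: the missing piece in both is the lower bound $R(ABC)\geq\Omega(\sqrt n)$ (Conjecture \ref{open}), which the paper reduces, via the rectangle/box bound, to a geometric mixing conjecture on $SO_n$ (Conjecture \ref{geo}): if $M,R\subseteq SO_n$ have Haar measure at least $2^{-\delta\sqrt n}$, then the density of the product $BC$ of independent uniform elements is within a constant factor of uniform outside an exponentially small set. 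Your proposal stops at the same point, so it is an accurate reconstruction of the paper's route rather than a proof.

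One part of your discussion of the obstacle is off. For this candidate the quantum protocol is exact (the normalized trace is $\pm 1$, so acceptance probabilities are $1$ and $0$), hence a constant-additive-error simulation is just an ordinary bounded-error randomized protocol and the required lower bound is for a \emph{constant} gap --- not the ``acceptance gap only $1/\mathrm{poly}(n)$'' regime you describe, which pertains to the $MIDDLE$ problem of Theorem \ref{long}, not to $ABC$. The genuine difficulty is instead that the inputs range over a continuous group, so the corruption/rectangle argument needs a hard-distribution statement saying that large boxes $L\times M\times R\subseteq SO_n^3$ cannot distinguish $\{ABC=I\}$ from $\{ABC=-I\}$; this is precisely what the geometric conjecture supplies. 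You should also record the caveats the paper needs: $n$ must be even (otherwise $-I\notin SO_n$, and determinants solve the problem with $O(1)$ communication), and the analogous mixing statement is false over $O_n$ (take $M=R=SO_n$, an index-two subgroup), which is why the conjecture is formulated for $SO_n$.
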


Consider the number-in-hand  $ABC$ problem involving three parties: Alice, Bob and Charlie, who are each given $n\times n$ matrices $A$, $B$ and $C$ respectively, where $A, B, C\in O_n$, where $O_n$ is the orthogonal group. The $ABC$ problem is described by the following function: $$ABC(A, B, C)=1\iff ABC=I,\hspace{5mm}ABC(A, B, C)=0\iff ABC=-I.$$

There is a one-clean-qubit quantum protocol of $O(\log{n})$ communication that accepts 1-inputs and rejects 0-inputs with certainty. The initial state starts off with one qubit in a pure state $\ket{0}$ and $\log{n}$ totally mixed qubits. The protocol goes as follows:
\begin{enumerate}
    \item Alice applies a Hadamard transformation to the clean qubit and obtains $\sigma=H\ket{0}=\frac{1}{\sqrt{2}}(\ket{0}+\ket{1})$. She then tensors it with an arbitrary state $\rho$ on $\log{n}$ qubits (for example $\frac{I}{n}$) and we denote the resulting state as $\zeta$. She then applies her controlled-$A$ unitary to $\zeta$ and gets $\zeta^\prime$. Alice send $\zeta^\prime$ to Bob.
    \item Bob applies his controlled-$B$ unitary to $\zeta^\prime$ and gets $\zeta^{\prime\prime}$. Bob sends $\zeta^{\prime\prime}$ to Charlie.
    \item Charlie applies his controlled-$C$ unitary to $\zeta^{\prime\prime}$ and gets  $\zeta^{\prime\prime\prime}$. He then applies a Hadamard transformation to the first qubit in $\zeta^{\prime\prime\prime}$ and does a measurement.\\
\end{enumerate}
The protocol is illustrated in Figure \ref{ABC}.\\

\begin{figure}[h]
    \centering
    \includegraphics[width=\textwidth]{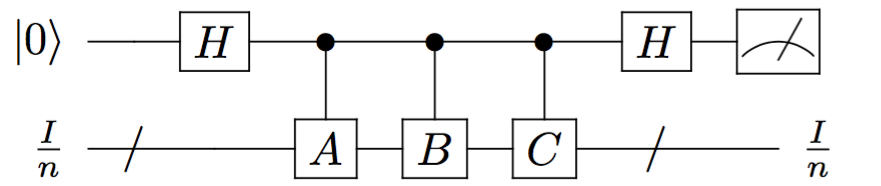}
    \caption{One-clean-qubit protocol for $ABC$}
    \label{ABC}
\end{figure}
\underline{Case 1: if $ABC=I$}\\
The composite of the controlled $A$, $B$ and $C$ is the same as that of a controlled-identity unitary, which does nothing to $\sigma$. When $\sigma$ undergoes a Hadamard transformation before being measured, it becomes the $\ket{0}$ state. The protocol outputs 1 if it measures $\ket{0}$ . \\

\underline{Case 2: if $ABC=-I$}\\
The composite of the controlled unitaries $A$, $B$ and $C$ is similar to that of a controlled-$Z$ unitary, which does a phase flip on $\ket{1}$ in $\sigma$, changing it into $\frac{1}{\sqrt{2}}(\ket{0}-\ket{1})$. We denote the phase-flipped $\sigma$ as $\sigma^\prime$. When $\sigma^\prime$ undergoes a Hadamard transformation before being measured, it becomes the $\ket{1}$ state. The protocol outputs 0 if it measures $\ket{1}$. \\
\begin{lemma}
   There exists a three-player number-in-hand one-clean-qubit protocol that solves $ABC$ exactly with communication $O(\log{n})$.
\end{lemma}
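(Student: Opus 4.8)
The plan is to verify that the Hadamard-test--style circuit sketched immediately above the statement is a legitimate three-player number-in-hand one-clean-qubit protocol of the claimed cost. First I would pin down the starting state as required by the model: the single clean qubit in $\ket{0}$ and the $\log n$ remaining qubits in the totally mixed state, so the global initial state is $\ket{0}\bra{0}\otimes\frac{I}{n}$ (the correctness argument below works for any state $\rho$ on the non-clean register, but the model forces $\rho=\frac{I}{n}$). Alice applies $H$ to the clean qubit, then her controlled-$A$ gate, and sends the clean qubit together with the $\log n$-qubit register to Bob; Bob applies controlled-$B$ and forwards everything to Charlie; Charlie applies controlled-$C$, applies $H$ to the clean qubit, and measures that qubit in the computational basis, accepting iff the outcome is $0$. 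Each of the two messages carries $\log n+1$ qubits, so the communication is $2(\log n+1)=O(\log n)$; each player touches only a controlled version of its own matrix, so the protocol is genuinely number-in-hand; and the only measurement is at the very end and does not depend on the inputs, so the one-clean-qubit constraints are met.

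The heart of the argument is the correctness computation, which I would carry out by showing the acceptance probability equals $\frac12\big(1+\tfrac1n\,\mathrm{Tr}(ABC)\big)$, which under the promise $ABC\in\{I,-I\}$ is exactly $1$ on $1$-inputs and $0$ on $0$-inputs. Concretely, after the two Hadamards and the three controlled gates, the $\ket{0}$-branch of the clean qubit leaves the register untouched while the $\ket{1}$-branch hits it with the unitary $ABC$ (see the ordering remark below). Since $ABC=\pm I$ acts on the register as the scalar $\pm 1$, the $\ket{1}$-branch merely acquires the phase $\pm1$ relative to the $\ket{0}$-branch, so the joint state factorizes as $\big(\tfrac{1}{\sqrt2}(\ket{0}\pm\ket{1})\big)\otimes\frac{I}{n}$ --- the totally mixed register is irrelevant to the outcome, which is precisely why one clean qubit suffices. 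The final $H$ on the clean qubit then maps it to $\ket{0}$ when $ABC=I$ and to $\ket{1}$ when $ABC=-I$, so the measurement returns the correct value with certainty. Equivalently, one invokes the standard trace formula for this circuit (as in \cite{Shor}) with $\rho=\frac{I}{n}$, giving acceptance probability $\frac12+\frac12\cdot\frac{1}{n}\mathrm{Tr}(ABC)=\frac12\pm\frac12$.

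The only bookkeeping point that deserves care is the order in which the matrices compose: applying controlled-$A$, then controlled-$B$, then controlled-$C$ to a state vector realizes the operator $CBA$ on the controlled branch, not $ABC$. This is harmless: an $n\times n$ orthogonal matrix $M$ has $\mathrm{Tr}(M)=n$ only if $M=I$ and $\mathrm{Tr}(M)=-n$ only if $M=-I$ (its eigenvalues lie on the unit circle), and by cyclic invariance of the trace $\mathrm{Tr}(CBA)=\mathrm{Tr}(ABC)=\pm n$, so $CBA=\pm I$ exactly when $ABC=\pm I$ and the branch phase is the same. (Alternatively, since each of $A,B,C$ is orthogonal a player may apply the controlled version of its transpose, or one may relabel the players, to realize the product $ABC$ literally.) I expect this reconciliation of the product order, together with writing out explicitly that the mixed register never influences the final measurement, to be the only genuinely delicate part; everything else is the short direct calculation above.
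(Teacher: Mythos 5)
Your write-up follows the paper's own route: the Hadamard test on the single clean qubit with the totally mixed $\log n$-qubit register as a catalyst, acceptance probability $\tfrac12\bigl(1+\tfrac1n\mathrm{Tr}(U)\bigr)$ for the composite branch unitary $U$, and communication $2(\log n+1)=O(\log n)$; all of that is correct and is exactly what the paper does. The problem is your primary resolution of the ordering issue, which you single out as the one delicate point and then get wrong. The matrix $CBA$ is the \emph{reversal} of $ABC$, not a cyclic permutation of it: the cyclic orbit of $ABC$ under the trace is $\{ABC,\,BCA,\,CAB\}$, and $CBA$ is not in it, so $\mathrm{Tr}(CBA)=\mathrm{Tr}(ABC)$ is false in general. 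Concretely, if $ABC=I$ then $C=B^{-1}A^{-1}$ and $CBA=B^{-1}A^{-1}BA$, a commutator that equals $I$ only when $A$ and $B$ commute (take $A,B$ to be $90^\circ$ rotations about different axes in $SO_3$ for a counterexample). Under the promise the branch operator is then $\pm B^{-1}A^{-1}BA$, which is not a scalar, the mixed register does not factor out, and the acceptance probability $\tfrac12+\tfrac{1}{2n}\mathrm{Tr}(B^{-1}A^{-1}BA)$ is neither $0$ nor $1$. So the step ``$CBA=\pm I$ exactly when $ABC=\pm I$'' would fail as stated.

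Fortunately the parenthetical alternatives you mention are correct and should be promoted to the actual argument: either each player applies the controlled version of the transpose of her matrix, so the composite is $C^TB^TA^T=(ABC)^T$, and $(ABC)^T=\pm I$ iff $ABC=\pm I$; or the messages flow Charlie $\to$ Bob $\to$ Alice (or any order whose composite lies in the cyclic orbit of $ABC$), so the branch unitary is literally $ABC$ up to a cyclic shift, which preserves both the trace and the property of being $\pm I$. With that one repair your proof coincides with the paper's (the paper itself elides the ordering question by implicitly composing gates in application order), and the rest of your verification --- the model compliance, the factorization of the state when the branch operator is a scalar, and the trace characterization $\mathrm{Tr}(M)=\pm n\Leftrightarrow M=\pm I$ for orthogonal $M$ --- is sound.
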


Note that the quantum protocol uses the arbitrary state $\rho$ (here $\rho=I/n$) as a catalyst as in \cite{Buhrman}.
Regarding the randomized complexity of $ABC$, we prove the following theorem:
\begin{theorem}\label{thm:ABC}
   $R(ABC)\leq O(\sqrt{n})$.\end{theorem}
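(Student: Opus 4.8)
The plan is to give an explicit $O(\sqrt n)$-communication randomized protocol. The starting point is that the promise $ABC\in\{I,-I\}$ means there is a single bit $\epsilon\in\{+1,-1\}$ with $ABC=\epsilon I$, and essentially \emph{any} scalar linear functional of $ABC$ reveals $\epsilon$. Concretely, set $\alpha:=A^{T}e_1$ (known to Alice; a unit vector since $A$ is orthogonal) and $\gamma:=Ce_1$ (known to Charlie; likewise unit). Then
\[
\epsilon \;=\; \langle e_1, ABC\,e_1\rangle \;=\; (A^{T}e_1)^{T}B(Ce_1) \;=\; \alpha^{T}B\gamma ,
\]
and this quantity is \emph{exactly} $\pm1$ under the promise. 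So it suffices for the three players to estimate the bilinear form $\alpha^{T}B\gamma$ to additive error below $1/2$ with probability $\ge 2/3$; the sign of the estimate is then $\epsilon$, and standard amplification (which is allowed for $R$) drives the error down.

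The protocol is a two-sided sketch. Using public randomness the players agree on matrices $S_1,S_2\in\mathbb{R}^{m\times n}$ with i.i.d.\ entries (Gaussian, or $\pm1$), scaled so that $\mathbb{E}[S_i^{T}S_i]=I_n$, with $m=\Theta(\sqrt n)$. Alice sends $\tilde\alpha:=S_1\alpha\in\mathbb{R}^m$; Charlie sends $\tilde\gamma:=S_2\gamma\in\mathbb{R}^m$; Bob (who knows $B$ and the public $S_1,S_2$) forms $\tilde B:=S_1BS_2^{T}\in\mathbb{R}^{m\times m}$ and outputs $1$ iff $\hat\epsilon:=\tilde\alpha^{T}\tilde B\tilde\gamma>0$. (Equivalently one can route this as Alice$\to$Bob$\to$Charlie, with Bob forwarding $(S_1\alpha)^{T}S_1BS_2^{T}\in\mathbb R^m$; any standard message order works.) The total communication is $O(m)=O(\sqrt n)$, modulo the precision needed to transmit the $m$ real entries — their magnitudes are $\mathrm{poly}(n)$-bounded, so $O(\log n)$ bits per coordinate, i.e.\ $\tilde O(\sqrt n)$ bits, suffice, and a slightly larger $m$ with randomized rounding removes the logarithm.

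For correctness, write $\hat\epsilon=\alpha^{T}S_1^{T}S_1BS_2^{T}S_2\gamma=\alpha^{T}(I+E_1)B(I+E_2)\gamma$ with $E_i:=S_i^{T}S_i-I$, so that
\[
\hat\epsilon \;=\; \alpha^{T}B\gamma \;+\; \alpha^{T}E_1B\gamma \;+\; \alpha^{T}BE_2\gamma \;+\; \alpha^{T}E_1BE_2\gamma .
\]
Since $E_1,E_2$ are independent and mean zero, $\mathbb{E}[\hat\epsilon]=\alpha^{T}B\gamma=\epsilon$. For the variance: each first-order term has the form $u^{T}E_i v$ with $u,v$ unit vectors, and a direct second-moment computation (using that off-diagonal entries of $E_i$ have variance $\Theta(1/m)$ and distinct pairs are uncorrelated) gives $\mathbb{E}[(u^{T}E_i v)^2]=O(1/m)$; more generally $\mathbb{E}[(u^{T}E_1 w)^2\mid w]=O(1/m)\,\|u\|^2\|w\|^2$ for a fixed $w$. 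Applying this to the cross term with $w=BE_2\gamma$, conditioning on $S_2$ and using that $B$ is norm-preserving, $\mathbb{E}[(\alpha^{T}E_1BE_2\gamma)^2]\le O(1/m)\,\mathbb{E}[\|E_2\gamma\|^2]$, and a short calculation yields $\mathbb{E}[E_2^{2}]=(n/m)I_n$, hence $\mathbb{E}[\|E_2\gamma\|^2]=n/m$. Altogether $\mathrm{Var}(\hat\epsilon)=O\!\left(\tfrac1m+\tfrac{n}{m^2}\right)$, which for $m=C\sqrt n$ with a large enough constant $C$ is below $1/12$. Chebyshev gives $\Pr[\,|\hat\epsilon-\epsilon|\ge 1/2\,]<1/3$, and since $\epsilon=\pm1$ the output equals $\epsilon$ with probability $>2/3$.

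\textbf{Main obstacle.} The delicate point is the second-order error term $\alpha^{T}E_1BE_2\gamma$: a naive operator-norm bound is worthless, because $\|E_i\|_{op}$ is as large as $n/m\approx\sqrt n$ when $m=\Theta(\sqrt n)$ (the matrices $S_i^{T}S_i$ have rank $m<n$, so they are very far from $I_n$ in operator norm, even though their expectation is $I_n$). One must instead exploit the \emph{independence} of $S_1$ and $S_2$ together with $B$ being orthogonal, to replace the operator norm by the averaged quantity $\mathbb{E}[\|E_2\gamma\|^2]=n/m$; this is precisely what brings the variance down to $O(1)$ at sketch dimension $\Theta(\sqrt n)$, and is the reason $\sqrt n$ (rather than a constant, as for an ordinary inner-product sketch) is the right size here. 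A secondary nuisance is the discretization of the transmitted reals, handled by the polynomial bound on their magnitudes as indicated above.
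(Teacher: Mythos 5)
Your protocol is correct, and it takes a genuinely different route from the paper's. Both proofs start from the same reduction: under the promise, the single diagonal entry $e_1^{T}ABCe_1=(A^{T}e_1)^{T}B(Ce_1)$ equals $\pm 1$, so it suffices to estimate one bilinear form $\alpha^{T}B\gamma$ with unit vectors held by Alice and Charlie. From there the paper is asymmetric: Charlie uses a public net of $2^{O(k)}$ random unit vectors, a cap-volume lemma guarantees one of them has correlation at least $\sqrt{k/n}$ with $\gamma=C_1$, he sends its $O(k)$-bit index, Bob applies $B$ to it, and then (using that $B^{T}A_1^{T}=\pm C_1$, so the component orthogonal to $C_1$ contributes nothing) Alice and Bob estimate a plain vector inner product of magnitude $\geq\sqrt{k/n}$ via the Kremer--Nisan--Ron protocol at cost $O(n/k)$; balancing $k=\sqrt n$ gives $O(\sqrt n)$. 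You instead sketch both sides independently to dimension $m=\Theta(\sqrt n)$ and control the error by a second-moment computation; your key point --- that the second-order term $\alpha^{T}E_1BE_2\gamma$ must be handled via independence and orthogonality of $B$ rather than operator norms, yielding variance $O(1/m+n/m^{2})$ --- is exactly the place where the $\sqrt n$ threshold appears, playing the role of the paper's balancing of $O(k)$ against $O(n/k)$. Your calculation checks out (indeed the three error terms are pairwise uncorrelated, so the variance bound is even cleaner than stated). The one loose end is discretization: your messages are real vectors, and as written you get $O(\sqrt n\log n)$ bits; the paper avoids this entirely because the only real-vector estimation step is delegated to KNR, which already costs $O(1/\epsilon^{2})$ \emph{bits}. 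Your suggested fix (randomized rounding at slightly larger $m$) is standard and plausible, but it is the step you would need to spell out to match the clean $O(\sqrt n)$ bound; alternatively you could quantize each transmitted coordinate to a single random sign, which essentially turns your sketch into a KNR-type protocol on each side. A minor stylistic difference: the paper's protocol is naturally one-way (Charlie $\to$ Bob $\to$ Alice), which it later uses when discussing the $aBc$ problem; your symmetric version can be routed one-way as you note, but this requires forwarding Bob's $m$-dimensional intermediate vector.
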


We explain the proof in Appendix \ref{ABC}. Let us note here that due to the cyclic property of the trace both the quantum and classical protocols for ABC are one-way and can be run in any order among the players, e.g.~Charlie to Alice to Bob.

    It remains an open problem to derive a matching lower bound for the randomized communication complexity of $ABC$.

    \begin{conj}\label{open}
       $R(ABC)\geq\Omega(\sqrt{n})$ as long as $n$ is even.    \end{conj}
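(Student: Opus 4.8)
%
The plan is to prove the conjectured lower bound $R(ABC)\geq\Omega(\sqrt n)$ by reducing a hard two-party problem to $ABC$ and then invoking a known $\Omega(\sqrt n)$-type bound. The most natural candidate is a disguised form of \emph{Gap Hamming Distance} or, more promisingly, a hidden Disjointness-like structure on $\sqrt n$-bit inputs embedded into orthogonal matrices. Concretely, I would encode two players' inputs as signed permutation matrices: given $u,v\in\{0,1\}^m$ with $m=\Theta(\sqrt n)$, Alice builds an orthogonal $A$ that is a block-diagonal of $\pm 1$'s and $2\times 2$ rotations, Charlie builds $C$ similarly, and Bob's $B$ is fixed (say $B=I$), so that $ABC=\pm I$ iff some parity of $u\cdot v$ vanishes. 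The point of passing through three players is cosmetic (the cyclic trace), so the real content is a two-party lower bound; the key step is choosing the embedding so that distinguishing $ABC=I$ from $ABC=-I$ forces the players to solve, e.g., $IP_2$ on $\Theta(\sqrt n)$ bits or a promise version of it, and $R_{1/3}(IP_2^{\sqrt n})=\Omega(\sqrt n)$ by discrepancy.

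The steps, in order, would be: (1) fix a reduction gadget mapping a hard bit-problem $g$ on $\Theta(\sqrt n)$ bits into a triple $(A,B,C)\in O_n^3$ computable locally by Alice/Bob/Charlie, such that $ABC(A,B,C)=g(\text{inputs})$ and the promise ($ABC\in\{I,-I\}$) is automatically satisfied; (2) verify that any randomized protocol for $ABC$ with error $1/3$ yields one for $g$ with the same communication up to $O(1)$, since each player computes their matrix with no communication; (3) invoke the known lower bound $R(g)=\Omega(\sqrt n)$ — if $g$ is $IP_2$ on $\Theta(\sqrt n)$ bits this is the classical discrepancy/Fourier bound, if $g$ is a $\sqrt n$-bit Disjointness promise problem it is Kalyanasubramaniam--Kushilevitz or Razborov; (4) conclude $R(ABC)=\Omega(\sqrt n)$, and note the evenness-of-$n$ hypothesis is exactly what lets the $2\times2$ rotation blocks tile the space so that the $-I$ case is realizable.

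The main obstacle is step (1): finding an embedding for which $ABC$ lands \emph{exactly} in $\{I,-I\}$ rather than merely having trace close to $\pm n$. A $\pm1$-diagonal encoding of $IP_2$ gives $ABC$ equal to a diagonal $\pm1$ matrix whose trace is $n-2\,|u\wedge v|$ — not $\pm I$ — so one must instead arrange the hard instances to live on the two-dimensional subspaces where a rotation by $0$ or $\pi$ occurs, i.e.\ reduce from a problem whose answer is a single global parity bit, then amplify that bit into a uniform $\pm1$ sign across all $n$ diagonal entries using shared randomness to randomly permute which coordinate carries the "signal". Making this amplification work while keeping $A,B,C$ orthogonal and locally computable, and while the resulting two-party problem still has $\Omega(\sqrt n)$ randomized complexity, is the crux; I expect the cleanest route is to reduce from a \emph{unique} (promise) version of set-disjointness on $\Theta(\sqrt n)$ variables, where Razborov's argument still gives $\Omega(\sqrt n)$ and the unique-intersection promise translates naturally into "$ABC$ has exactly one flipped eigenvalue," which can then be spread out. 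If even this resists a clean reduction, the fallback is to prove the bound directly via a discrepancy or corruption bound on the communication matrix of $ABC$ under a suitable distribution on $O_n$, which is presumably why the authors leave it as a conjecture.
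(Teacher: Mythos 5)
First, a point of calibration: the statement you were asked to prove is left \emph{open} in the paper. The authors do not prove $R(ABC)\geq\Omega(\sqrt n)$; they only show that it would follow from a geometric mixing conjecture (Conjecture \ref{geo}) about products of elements drawn from large subsets of $SO_n$, via the rectangle/box bound in the three-player number-in-hand model. So your proposal should be judged on its own merits as a proof attempt, and by your own admission it is incomplete at step (1). Unfortunately the gap there is not merely technical: the strategic premise is wrong.

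Your plan declares the three-player structure ``cosmetic'' and fixes $B=I$ so that the real content is a two-party reduction. But if $B$ is publicly known, the promise $ABC=\pm I$ forces $C=\pm(AB)^{-1}=\pm A^T$, i.e.\ Alice knows Charlie's matrix up to a single global sign. That residual problem costs $O(\log n)$ bits deterministically: Alice takes the first column $w$ of $A^T$ (a unit vector), sends the index $i=\arg\max_j|w_j|$ (so $|w_i|\geq 1/\sqrt n>0$) together with the sign of $w_i$, and Charlie compares with the sign of his $(i,1)$ entry. Hence the restriction of $ABC$ to inputs with $B=I$ admits an $O(\log n)$ protocol, and therefore \emph{no} locally computable gadget mapping a problem of complexity $\omega(\log n)$ into that slice can exist --- composing it with the $O(\log n)$ protocol would contradict the hardness of the source problem. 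The same objection kills any variant in which two of the three matrices jointly determine the third up to sign for one player; this is exactly why the paper's own $O(\sqrt n)$ upper bound works, and why any hardness must hinge on Bob's matrix being unknown at both ends. Your proposed fix --- reduce from unique disjointness and ``spread'' the single flipped eigenvalue into a uniform $-I$ using shared randomness --- also cannot be done locally while preserving the promise, since a single flipped eigenvalue gives $\det=-1\neq\det(\pm I)$ for even $n$, and locating (let alone delocalizing) the flipped coordinate is the problem itself. Your fallback of a direct discrepancy/corruption bound under a suitable distribution on $O_n$ is, in substance, the route the authors actually propose, and it is precisely the step that remains unproven; your instinct that this is why the statement is left as a conjecture is correct.
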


We now consider a geometric conjecture that implies Conjecture \ref{open}. This conjecture says that if we take two sufficiently large subsets of $SO_n$ (the special orthogonal group), choose two operators independently from them, and multiply them, we get something similar to the uniform distribution on all of $SO_n$.

\begin{conj}\label{geo}
There are constants $\delta>0, \gamma>1$ such that the following is true:

Let $M,R\subseteq SO_n$ and, for the Haar measure $\mu$ on $SO_n$, let $\mu(M),\mu(R)\geq 2^{-\delta\sqrt n}$.
Denote by $\tau$ the density function of the probability distribution that arises, when $B\in M$ and $C\in R$ are chosen uniformly from these sets independently, and the matrix product $BC$ is formed. Then
\[Prob_{A\in SO_n} (\tau(A)\not\in[1/\gamma,\gamma])\leq 2^{-\delta\sqrt n}.\]
\end{conj}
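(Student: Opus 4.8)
\medskip\noindent\textbf{Proof strategy.} Write $f_M=\mathbf 1_M/\mu(M)$ and $f_R=\mathbf 1_R/\mu(R)$, so that $\tau=f_M * f_R$, with $\tau(A)=\int_{SO_n} f_M(g)\,f_R(g^{-1}A)\,d\mu(g)$, is exactly the density of $BC$ with respect to Haar measure. By Peter--Weyl, writing $\widehat f(\rho)=\int f(g)\,\rho(g)^{\dagger}\,d\mu(g)$ for each irreducible representation $\rho$ of $SO_n$, convolution multiplies Fourier coefficients and, since $f_M,f_R$ are probability densities, their trivial-representation coefficient is $1$; hence
\[
\tau-1=\sum_{\rho\neq\mathbf 1} d_\rho\,\mathrm{Tr}\!\big(\widehat{f_M}(\rho)\,\widehat{f_R}(\rho)\,\rho(A)\big).
\]
The plan is to bound the right-hand side in a norm strong enough to turn it, via an appropriate concentration estimate, into the claimed bound on $\mathrm{Prob}_A\!\big(\tau(A)\notin[1/\gamma,\gamma]\big)$.

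The first attempt is the standard mixing bound: from $\|\widehat{f_M}(\rho)\|_{\mathrm{op}}\le d_\rho^{-1/2}\|f_M\|_2$ for $\rho\neq\mathbf 1$, $\|f_M\|_2^2=1/\mu(M)$ and Parseval for $f_R$,
\[
\|\tau-1\|_2^2\le \frac{\|f_M\|_2^2}{D}\big(\|f_R\|_2^2-1\big)\le \frac{1}{D\,\mu(M)\,\mu(R)},
\]
where $D$ is the least dimension of a nontrivial irreducible representation of $SO_n$. For $n\ge 5$ the algebra $\mathfrak{so}_n$ is simple and $D=n$ (the defining representation), so $SO_n$ is only $n$-quasirandom and the bound reads $\|\tau-1\|_2\le (n\,\mu(M)\mu(R))^{-1/2}$, which is $2^{\Omega(\sqrt n)}$ when $\mu(M),\mu(R)\approx 2^{-\delta\sqrt n}$. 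So polynomial quasirandomness does not suffice on its own, and one must exploit either that the Fourier mass of $f_M * f_R$ sits mostly on representations of dimension $2^{\Omega(\sqrt n)}$, or the Riemannian geometry of $SO_n$.

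I would pursue the geometric route. With its bi-invariant metric $SO_n$ has Ricci curvature $\tfrac{n-2}{4}\,g>0$, hence a dimension-free log-Sobolev inequality and spectral gap $\lambda_1=\Theta(n)$, and its heat kernel $K_t$ is uniformly within a factor $\gamma$ of $1$ once $t$ exceeds a suitable constant multiple of $(\log n)/n$. The idea: replace $\mathbf 1_M$ and $\mathbf 1_R$ by unions of small interior balls, losing only a constant fraction of mass; by a local-limit estimate on $SO_n$ the product of two such pieces is comparable to Haar on all but an $e^{-\Omega(n)}$ fraction of the group, and the ``bad'' pieces are controlled by a union bound over an $r$-packing, using the Gaussian concentration $e^{-\Omega(n\varepsilon^2)}$ of the Stiefel projections $g\mapsto(ge_1,\dots,ge_k)$. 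Summing the good contributions should give $\tau\ge1/\gamma$ off an exponentially small set, and the same decomposition applied to the (also large, after normalization) complements $SO_n\setminus M,\,SO_n\setminus R$ should give $\tau\le\gamma$.

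The step I expect to be the real obstacle is the extremal case of \emph{bi-invariant} sets, which any proof must confront. Take $M=R=\{g\in SO_n:\langle ge_n,e_n\rangle> t\}$ with $t=\Theta(n^{-1/4})$ chosen so $\mu(M)=2^{-\Theta(\sqrt n)}$. Then $\tau$ is $\mathrm{Stab}(e_n)$-bi-invariant, hence a function of $x=\langle Ae_n,e_n\rangle$ alone; and since $\langle (BC)e_n,e_n\rangle=\langle Ce_n,\,B^{-1}e_n\rangle$ is the inner product of two independent near-uniform vectors of that cap, its law is $\approx\mathcal N(t^2,1/n)$ against $\approx\mathcal N(0,1/n)$ for uniform $A$, so $\tau(A)\approx e^{n t^2 x}\approx 2^{2\delta Z}$ with $Z\sim\mathcal N(0,1)$ --- which lies outside $[1/\gamma,\gamma]$ on a positive constant fraction. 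Hence the stated conclusion forces $\gamma$ to grow (e.g.\ $\gamma=\mathrm{poly}(n)$, under which the example is harmless) or $\delta\to0$; the robust statement that the $ABC$ reduction actually needs is the $L^1$-antisymmetry bound $\|\tau-\tau(-\,\cdot\,)\|_1=o(1)$, which balances the $\pm I$ halves of any large combinatorial box and which in this example is only $O(\delta)$. The genuinely hard part is then to control the low-dimensional (defining-representation) component of $\tau$ for an \emph{arbitrary} structured pair $M,R$ after a single product --- precisely where the quasirandomness bound above is powerless.
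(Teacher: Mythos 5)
There is nothing in the paper to compare your proposal against: the statement is Conjecture~\ref{geo}, which the authors leave open and use only as a hypothesis implying Conjecture~\ref{open}. Judged as a proof, your submission has a genuine gap, and you essentially say so yourself: the Peter--Weyl computation is correct but, as you observe, quasirandomness only gives $\|\tau-1\|_2\le (n\,\mu(M)\mu(R))^{-1/2}$, which is vacuous at the measure scale $2^{-\delta\sqrt n}$; and the heat-kernel/ball-decomposition route is stated at the level of intentions only --- no local limit theorem on $SO_n$ is formulated, the packing/union-bound step is not quantified, and the ``low-frequency'' component that you correctly identify as the crux is exactly the part for which no argument is offered.

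The valuable content is your spherical-cap example, which as far as I can check is sound and in fact refutes Conjecture~\ref{geo} as literally stated. With $M=R=\{g:\langle ge_n,e_n\rangle>t\}$ and $t=\Theta(n^{-1/4})$ chosen so that $\mu(M)=\Theta(2^{-\delta\sqrt n})$, the density $\tau$ is bi-invariant under the stabilizer of $e_n$ and hence a function of $x=\langle Ae_n,e_n\rangle$ alone; the law of $x$ under $BC$ is approximately $N(t^2,1/n)$ against $N(0,1/n)$ under Haar, so $\log\tau\approx nt^2x-nt^4/2$ is, under Haar, a Gaussian of constant variance $\Theta(\delta^2)$. Consequently $\mathrm{Prob}_A(\tau(A)\notin[1/\gamma,\gamma])$ is a positive constant depending only on $\delta$ and $\gamma$, which cannot be bounded by $2^{-\delta\sqrt n}$ for large $n$, whatever constants are chosen; the supporting approximations (concentration of the cap coordinate at $t+O(1/(nt))$, near-independence and $N(0,1/n)$ behaviour of the orthogonal components) are standard and I see no flaw. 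You are also right that this need not kill the intended application: the reduction to Conjecture~\ref{open} only uses that $\mu(L\times M\times R\mid H)$ and $\mu(L\times M\times R\mid G)$ agree up to constant factors, an averaged statement in the spirit of your $L^1$/antisymmetry reformulation, which the cap example does not contradict. The right next step is therefore not to keep attacking the conjecture in its current form but to record the counterexample and work with a weakened, averaged version.
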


Conjecture \ref{open} follows from Conjecture \ref{geo} by an application of the rectangle bound from communication complexity: A large rectangle/box $L\times M\times R$, where $L,M,R\subseteq SO_n$ leads to a $\tau$ that is similar to the uniform distribution. Only an exponentially small subset of matrices $A\in SO_n$ has $\tau(A)$ not constant. This also implies that $E_{A\in L} \tau(A)=\Theta(1)$, if we throw out the small subset of $A\in L$ where $\tau(A)$ is too large (this does not affect size or error much.)
Denote by $\beta_C$ the density function of the distribution where a random $B\in M$ is multiplied to a fixed $C$. $\tau(A)=E_{C\in R}\beta_C(A^*)$.


Define $H=\{(A,B,C): A,B,C\in SO_n$ and $ABC=I\}$ and $G=\{(A,B,C): A,B,C\in SO_n$ and $ ABC=-I$.
It is easy to show that \[E_{A\in L} E_{C\in R} [\beta_C(A)]=\frac{\mu(L\times M\times R|H)}{\mu(L\times M\times R)}.\]
That means that $\mu(L\times M\times R|H)$ and $\mu(L\times M\times R|G)$ differ by at most a constant factor and $L\times M\times R$ has constant error under the distribution that puts weight 1/2 on each of $G,H$. Hence the rectangle/box $L\times M\times R $ has large error. We use that $n$ is even because otherwise $-I\not\in SO_n$.

Furthermore in the case of odd $n$ Alice, Bob, and Charlie can simply compute $det(ABC)=det(A)det(B)det(C)$ in order to determine whether $ABC=I$ or $ABC=-I$. This does not work in the case of even $n$ of course.

We also note that the corresponding conjecture is wrong for $O_n$, since $SO_n$ is a subgroup of size 1/2 that serves as a counterexample. Note that $SO_n$ does not have any proper subgroups of size larger than 0. This follows from the fact that unlike $O_n$, $SO_n$ has no subgroup that has the same Lie-algebra as itself.

As weaker conjecture, in which the stated probability is upper bounded by a small constant would be sufficient to give a lower bound on one-way protocols and might be much easier to achieve. We note that in \cite{KLim} we have recently shown a lower bound for the related $aBc$ problem, in which Alice and Charlie receive vectors from the sphere instead of matrices in a generalized one-way setting. Note that the protocol for Theorem \ref{thm:ABC} really solves the $aBc$ problem.



\section{Conclusion}
We investigate a communication complexity model in which all storage consist initially of only one clean qubit plus other qubits that start in the totally mixed state, and  where only one projective measurement can be done in the end. Since error reduction is not possible efficiently in this model we define an appropriate complexity measure depending on the bias.

We introduce the notions of clocked protocols with ghosted communication channel and semi-unclocked  protocols with fixed communication channel for this model. Efficient simulations of clocked $k$-clean-qubits protocols by clocked one-clean-qubit protocols as well as simulations of clocked $k$-clean-qubit protocols by semi-unclocked one-clean-qubit protocols are described. Remarkably, the semi-unclocked model is only less efficient by a logarithmic factor compared to the clocked model.


We study one-way protocols in the model and are able to almost pinpoint their complexity in terms of PP-communication complexity: $2^{\Omega(PP(f))-O(\log{n})}\leq Q_{[1]}^{A\to B}(f)\leq 2^{O(PP(f))}$, implying that functions when computed using the one-clean-qubit model have a cost of at most $2^{O(m)}$, where $m$ is the input length, and that this is tight for some functions (one-way).

Classically simulating  a certain one-clean-qubit protocol for the $MIDDLE(x,y)$ problem with $\frac{1}{poly(n)}$ additive error is hard, as a classical simulation with such error requires $\Theta(n)$ communication, compared to the $O(\log{n})$ communication of the one-clean-qubit protocol.

We conjecture that classically simulating the one-clean-qubit protocol we give for the three-player number-in-hand $ABC$ problem with constant additive error requires $\Omega(\sqrt{n})$ communication, compare to the $O(\log{n})$ communication in the one-clean-qubit protocol. We show the corresponding upper bound on $R(ABC)$.

\newpage

\bibliography{qc}
\bibliographystyle{plainurl}


\begin{appendix}

\section{Open Problems}
\begin{itemize}
    \item Prove Conjecture \ref{open} or the weaker version mentioned above that establishes a lower bound for one-way protocols.
    \item What are some nontrivial lower bounds on $Q_{[1]}(f)$, for instance what are $Q_{[1]}(DISJ)$ and $Q_{[1]}(ViS)$? We conjecture that $Q_{[1]}(DISJ)=\Omega(n)$ based on the difficulty of trying to compute the function in the one-clean-qubit model. Suppose that $ViS$ can computed in the one-clean-qubit communication model efficiently (say with $poly(\log)$ communication), then arbitrary one-way quantum protocols can be simulated with low communication in the one-clean-qubit model. However, we assume that such a supposition seems unlikely and hence we conjecture that $Q_{[1]}(ViS)$ is fairly large, possibly even $Q_{[1]}(ViS)=\Omega(n)$.
    \item Is $Q_{[1]}(f)>n$ for any function? A candidate for this problem would be a random function chosen from all functions $f:\{0, 1\}^n\times\{0, 1\}^n\to\{0, 1\}$. It would be interesting if the one-clean-qubit model can compute all or most $f:\{0, 1\}^n\times\{0, 1\}^n\to\{0, 1\}$ with linear cost.
    \item  What are some examples of functions in which $Q_{[1]}(f)>>R(f)$ or $Q_{[1]}(f)<<R(f)$? For instance, for the two-player $ABC$ problem, $ABC_2$, described as follows:
    $$ABC_2(A_1, A_2, B_1, B_2)=1\Leftrightarrow A_1B_1A_2B_2=I,$$
    $$ABC_2(A_1, A_2, B_1, B_2)=0\Leftrightarrow A_1B_1A_2B_2=-I,$$
    where $A_1, A_2$ are Alice's unitaries and   $B_1, B_2$ are Bob's unitaries, $Q_{[1]}(ABC_2)=O(\log{n})$. What is $R(ABC_2)$?
    \item Are there any specific lower bound methods for the semi-unclocked one-clean-qubit protocol?

\end{itemize}

\section{Proof of Theorem \ref{c-c}}

   The clocked $k$-clean-qubit protocol $\mathcal{P}$ illustrated in Figure \ref{clockedk} has communication $c$ and a bias of $\epsilon$. Hence, it has an error probability of $\frac{1}{2}-\epsilon$ and cost $\frac{c}{\epsilon^2}$. Denote by $U^i_x$ Alice's unitaries and by $V^i_y$ Bob's unitaries for $i=1,\cdots, r$.
   Note that $U^i_x$ is defined as a unitary on all qubits, but acts only on Alice's qubits.

   \begin{figure}[h]
       \centering
       \includegraphics[width=\textwidth, height=3cm]{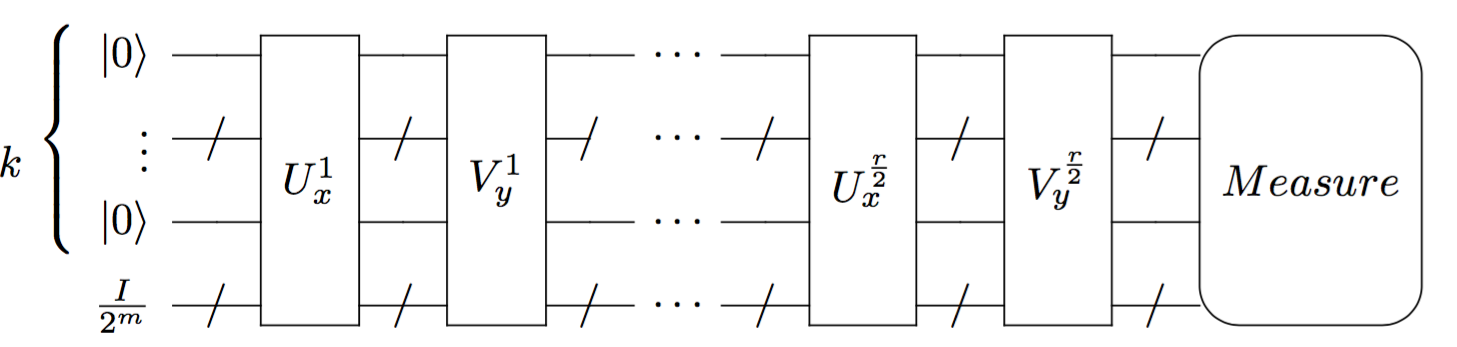}
       \caption{Clocked k-clean-qubits protocol $\mathcal{P}$}
       \label{clockedk}
   \end{figure}

   $\mathcal{P}$ can be modified into a clocked one-clean-qubit protocol $\Tilde{\mathcal{P}}$ as in Figure \ref{clockedone} with about the same amount of communication.
   \begin{figure}[h]
       \centering
       \includegraphics[width=\textwidth, height=2cm]{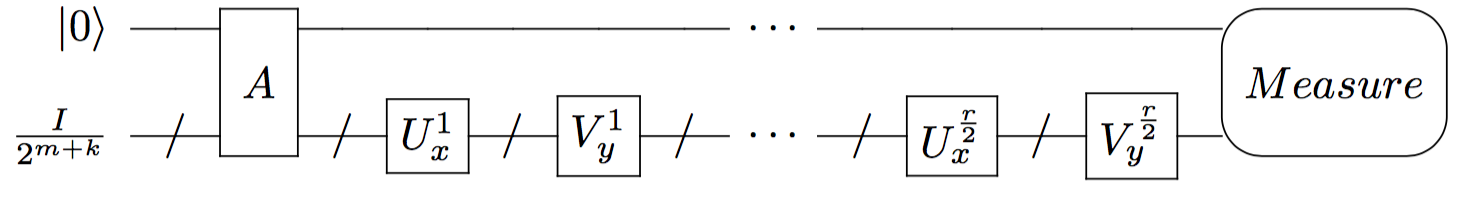}
       \caption{Clocked one-clean-qubit model $\Tilde{\mathcal{P}}$}
       \label{clockedone}
   \end{figure}

   In $\Tilde{\mathcal{P}}$, the unitary $A$ does a bit flip on the first qubit if the next $k$ qubits are in the $\ket{0}$ state, and does nothing otherwise. All the $k+m$ mixed qubits undergo the same series of unitary transformation as  in $\mathcal{P}$. The measurement in $\Tilde{\mathcal{P}}$ is done as follows:
   \begin{itemize}
       \item If the first qubit is $\ket{0}$, a "coin toss" is being done.
       \item If the first qubit is $\ket{1}$, the measurement is carried out as per $\mathcal{P}$.
   \end{itemize}
   Note that the two measurements can be combined into one.

   The communication in $\Tilde{\mathcal{P}}$ is $c$ or $c+1$, depending on which player does the measurement. If the measurement is done by the player who begins the communication, the communication is $c$. Otherwise, the first qubit has to be sent to the other player for the measurement to be done, causing the communication to be increased to $c+1$.

  The error probability of $\Tilde{\mathcal{P}}$ can be computed to be $$(1-\frac{1}{2^k})\cdot \frac{1}{2}+\frac{1}{2^k}\cdot(\frac{1}{2}-\epsilon)=\frac{1}{2}-\frac{\epsilon}{2^k}.$$ Hence, the bias decreases from $\epsilon$ to $\frac{\epsilon}{2^k}$.

  The cost of $\Tilde{\mathcal{P}}$ is given by $c\cdot(\frac{2^k}{\epsilon})^2$ or $(c+1)\cdot(\frac{2^k}{\epsilon})^2$.

\section{Proof of Theorem \ref{strong}}

From Theorem \ref{c-c}, a clocked $k$-clean qubit protocol $\Tilde{\mathcal{P}}$ with a ghosted communication channel that does an arbitrary projective measurement and has communication $c$ and a bias of $\epsilon$, can be modified into a clocked one-clean-qubit protocol $\mathcal{P}$ with a ghosted communication channel, that does an arbitrary projective measurement, has communication $c+1$ and bias $\frac{\epsilon}{2^k}$. The total number of qubits is $m+k+1$, with 1 clean qubit.

   We would like to turn $\mathcal{P}$ into a protocol $\mathcal{P}^\prime$ that measures only one qubit in the computational basis. This can be done by adding an extra clean qubit and replacing the measurement in $\mathcal{P}$ with a unitary operator $U_S$ and a measurement that measures the newly added qubit in the standard basis. $U_S$ does the following:
   $$U_S:\begin{cases}
   \ket{a}\ket{b_i}\mapsto\ket{a}\ket{b_i}, \text{for}\hspace{1mm}b_i\in B\\
   \ket{a}\ket{b_i}\mapsto\ket{a\oplus 1}\ket{b_i}, \text{for}\hspace{1mm}b_i\notin B\\
   \end{cases},$$ where $a\in\{0, 1\}$ and $B=\{b_1,\cdots, b_l\}$ is the basis of the subspace  $S\subseteq\mathbb{C}^{m+k+1}$, which is a constituent of the observable used to measure the quantum state in $\mathcal{P}$.

    In other words, $U_S$ flips the first qubit  on any basis vector $b_i\notin B$, and does nothing otherwise. The resulting protocol $\mathcal{P}^\prime$ is as follows:

   \begin{figure}[h]
      \centering
      \includegraphics[width=0.5\textwidth, height=2.5cm]{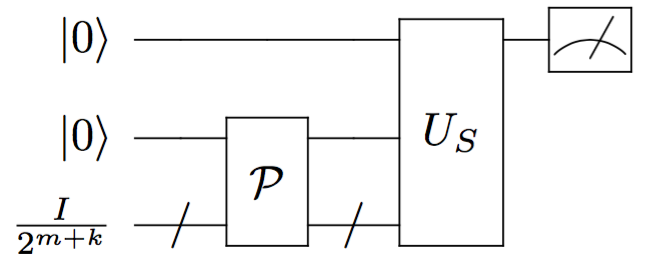}
      \caption{Clocked two-clean-qubit protocol that measures one qubit $\mathcal{P}^\prime$}
      \label{arbtoone}
   \end{figure}

   \begin{remark}
      A clocked protocol with a ghosted communication channel can be easily converted to one with fixed channel in which Alice and Bob take turns to send one qubit each. This at most doubles the communication.
   \end{remark}

   In the new protocol, the communication channel is fixed, the total communication is increased to at most $2(c+1)$,  and the bias remains unchanged.

   According to Shor \cite{Shor}, the probability of measuring 0 (which corresponds to acceptance) can be made to depend only on  the trace of  a unitary operator as shown below.

   Consider the following trace estimation protocol $\mathcal{P}_{main}$ illustrated in Figure \ref{pmain},
   \begin{figure}[h]
      \centering
      \includegraphics[width=0.6\textwidth, height=2cm]{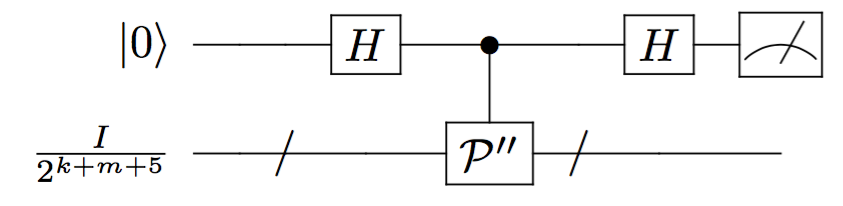}
      \caption{Trace estimation protocol $\mathcal{P}_{main}$}
      \label{pmain}
   \end{figure}

   which contains the unitary operator  $\mathcal{P}^{\prime\prime}$ shown in Figure \ref{pprime}.
$\mathcal{P}_{main}$ accepts with probability \[\frac12+\frac{Re(Tr(\mathcal{P''}))}{2^{d+1}},\]
 where $d=m+k+5$ is the number of qubits in $\mathcal{P}''$ and $Re(x)$ is the real part of $x$.

   \begin{figure}[h]
      \centering
      \includegraphics[width=0.8\textwidth, height=5cm]{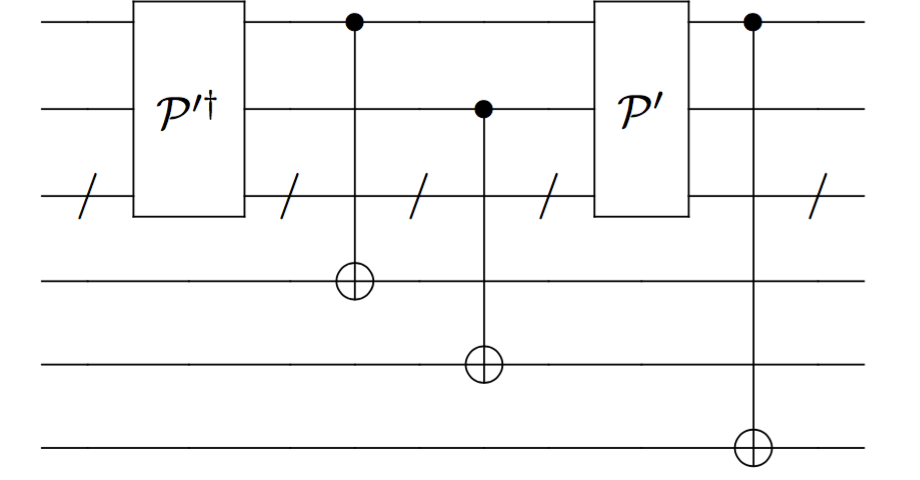}
      \caption{$\mathcal{P}^{\prime\prime}$}
      \label{pprime}
   \end{figure}

Let $I_\ell$ denote the $2^{\ell}$-dimensional identity matrix.
We have that \[Tr[(\ket{0}\bra{0}\otimes I_{m+k+1})\mathcal{P}^\prime(\ket{0}^2\bra{0}^2\otimes I_{m+k})\mathcal{P}^{\prime\dagger}]=\frac18 Tr[\mathcal{P}''],\]
because $Tr[\mathcal{P}'']=\sum_{x\in \{0,1\}^{m+k+5}} \bra{x}\mathcal{P''}\ket{x}$, and so for instance basis vectors $\ket{x}$ that have a 1 in qubit 1 contribute nothing to the sum due to the rightmost CNOT.
Similarly, the other CNOTs correspond to the other projection one the left hand side.
This equation also shows that the right-hand-side trace is real: up to scaling the left hand side corresponds to a probability of measuring $0$ when running $\mathcal{P}'$ on the two-clean-qubit state.

The acceptance probability of $\mathcal{P}_{main}$ is given by

\begin{align*}
\begin{split}
         p_0 & = \frac{1}{2}+ \frac{Tr[\mathcal{P}^{\prime\prime}]}{2^{k+m+6}}\\
         & = \frac{1}{2}+\frac{8\cdot Tr[(\ket{0}\bra{0}\otimes I_{m+k+1})\mathcal{P}^\prime(\ket{0}^2\bra{0}^2\otimes I_{m+k})\mathcal{P}^{\prime\dagger}]}{2^{k+m+6}}\\
         & = \frac{1}{2}+\frac{8\cdot 2^{k+m}\cdot(\frac{1}{2}+\frac{\epsilon}{2^k})}{2^{k+m+6}}\\
         & = \frac{1}{2}+ \frac{1}{16}+\frac{\epsilon}{2^{k+3}}\\
      \end{split}
   \end{align*}

   \begin{remark}
      The factor of 8 instead of 4 as in \cite{Shor} is due to the presence of three CNOT gates/extra qubits instead of two.
   \end{remark}

   The communication of $\mathcal{P}_{main}$ is four times the communication of $\mathcal{P}'$, since $\mathcal{P}''$ runs $\mathcal{P}'$ backwards and forwards, and because the clean control qubit in $\mathcal{P}_{main}$ must be communicated in every round (every round communicates only one qubit in $\mathcal{P}'$), i.e. the communication becomes $8(c+1)$. The bias decreases to $\frac{\epsilon}{2^{k+3}}$ and is around $\frac{1}{2}+\frac{1}{16}$ instead of $\frac{1}{2}$.

   Lastly, we turn $\mathcal{P}_{main}$ into a semi-unclocked protocol $\mathcal{P}_f$ by adding $\log{r}$ mixed qubits to act as a counter, where $r$ is the number of rounds.
   The resulting protocol looks as follows:

   \begin{figure}[h]
       \centering
       \includegraphics[width=\textwidth, height=3cm]{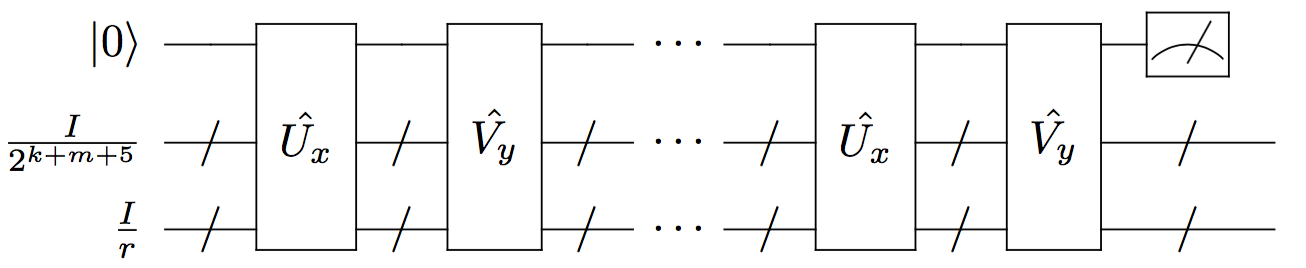}
       \caption{Semi-unclocked one-clean-qubit protocol that measure one qubit $\mathcal{P}_f$}
       \label{semiunclockedpmain}
   \end{figure}

   In $\mathcal{P}_f$, $\hat{U_x}=(H\otimes I)\cdot U_x\cdot (H\otimes I)$, where
   \[U_x:\ket{z}\ket{i}\mapsto(U_x^i\ket{z})\ket{i}\]

   and $\hat{V_y}=(H\otimes I)\cdot V_y\cdot (H\otimes I)$, where
   \[V_y:\ket{z}\ket{i}\mapsto(V_y^i\ket{z})\ket{i+1\mod r},\]
   for all $z\in\{0, 1\}^{k+m+6}$, for all $i\in\{0, 1\}^{\log{r}}$ and where $U_x^i$ and $V_y^i$ are the unitaries from $\mathcal{P}^{\prime\prime}$.

   This means that, starting from a random $j$ on the counter, the unitaries $\hat{V_y}$ and $\hat{U_x}$ apply $V_y^i$ and $U_x^i$ in the correct, but shifted order. Also note that the Hadamard operators cancel out in between consecutive unitaries, and only the first and last have an effect.

   \begin{fact} [Cyclic property of matrix trace]\label{cyclic}
      The trace of a product of three or more square matrices is invariant under cyclic permutations of the order of multiplication of the matrices.
   \end{fact}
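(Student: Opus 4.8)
The plan is to derive the statement from the single identity $Tr(AB)=Tr(BA)$ together with an easy induction on the length of the product. First I would establish the two-matrix case: for an $m\times n$ matrix $A$ and an $n\times m$ matrix $B$, both $AB$ and $BA$ are square, and writing the trace as a double sum over entries and swapping the (finite) order of summation gives $Tr(AB)=\sum_{i,j}A_{ij}B_{ji}=\sum_{j,i}B_{ji}A_{ij}=Tr(BA)$. This is the only genuine computation in the whole argument.

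Next, for a product $A_1A_2\cdots A_k$ with $k\ge 3$ that is a well-formed square product, I would group it as $A_1\cdot(A_2\cdots A_k)$ and apply the two-matrix case with the second factor in the role of $B$; this yields $Tr(A_1A_2\cdots A_k)=Tr(A_2\cdots A_kA_1)$, i.e.\ the trace is unchanged by one cyclic shift. Since any cyclic permutation of $(A_1,\dots,A_k)$ is a composition of single shifts, iterating this step (a trivial induction on the number of shifts) gives invariance under every cyclic permutation.

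There is no real obstacle here: the statement is elementary and entirely standard. The only point that needs a moment's attention is dimensional bookkeeping --- in the way the fact is used (the counter of $\mathcal{P}_f$ starting from an arbitrary residue $j\bmod r$ merely rotates the product of the per-round unitaries coming from $\mathcal{P}''$), all the matrices involved are square of the same size, so each cyclic rotation is again a legitimate square product and the two-matrix identity applies at every step of the induction.
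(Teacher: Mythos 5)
Your proof is correct and is the standard argument: reduce everything to the two-matrix identity $Tr(AB)=Tr(BA)$ via a swap of finite sums, then obtain a single cyclic shift by grouping the product as $A_1\cdot(A_2\cdots A_k)$ and iterate. The paper states this as a Fact without proof (it is a textbook identity used only to justify that the counter in $\mathcal{P}_f$ may start at an arbitrary residue $j \bmod r$), so there is nothing to compare against; your dimensional remark that all the per-round unitaries are square of the same size is exactly the observation needed for the application.
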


   Since the acceptance probability of $\mathcal{P}_{main}$ depends only on the trace of the product of the sequence of unitary operators in $\cal P''$, it follows from Fact \ref{cyclic} that the counter can start from any arbitrary $j\mod r$ without affecting the acceptance probability of $\mathcal{P}_f$.

   The protocol terminates after $r$ rounds of communication. Note that $r=\Theta(c)$, the total communication is now $8(c+1)+O(c\log{c})=O(c\log{c})$. The bias is remains unchanged from $\mathcal{P}_{main}$, i.e. $\Omega(\frac{\epsilon}{2^k})$.

   \section{Proofs Concerning One-Way Protocols}

   \subsection{Proof of Theorem \ref{theorem1}}

     Consider a $c$-bit $PP$-communication protocol $\mathcal{P}$ with bias $\epsilon$ where Alice sends a message $T(x)$ of length $c$ to Bob.
   \begin{enumerate}
       \item We define Alice's unitary $U_A^x$ such that
       \begin{itemize}
           \item If $z=T(x)$, then $U_A^x:\ket{0}\ket{z_1\cdots z_{c}}\mapsto\ket{1}\ket{z_1\cdots z_{c}}$
           \item If $z\neq T(x)$, then $U_A^x:\ket{0}\ket{z_1\cdots z_{c}}\mapsto\ket{0}\ket{z_1\cdots z_{c}}$
        \end{itemize}
         and extend to a unitary in any possible way, for all $z\in\{0, 1\}^c$. Alice applies $U_A^x$ to the initial state, and computes $U_A^x(\ket{0}\bra{0}\otimes\frac{I}{2^c})U_A^{x\dagger}$.
       \item Alice then sends the result $\sigma$ to Bob. This requires $c+1$ qubits of communication.
       \item Upon receiving $\sigma$ from Alice, Bob tensors it with $\frac{I}{2}$ and obtains the state $\sigma\otimes\frac{I}{2}$.
       Bob then applies the unitary $V_B^y$ to the state $\sigma\otimes\frac{I}{2}$, in particular, $V_B^y(\sigma\otimes\frac{I}{2})V_B^{y*}$, as follows
       \[V_B^y:
       \begin{cases}
          \ket{0}\ket{z_1\cdots z_{c+1}}\mapsto\ket{0}\ket{z_{c+1}}\ket{z_1\cdots\ z_{c}}\\
          \ket{1}\ket{z_1\cdots z_{c+1}}\mapsto\ket{1}U_B^y\otimes I\ket{z_1\cdots z_{c+1}},\\
      \end{cases}
      \]
      for all $z\in\{0, 1\}^{c+1}$. That is to say, if the first qubit of $\sigma\otimes\frac{I}{2}$ is 1, $V_B^y$ will apply the protocol unitary $U_B^y$. Otherwise, a "coin toss" is done by flipping the last qubit over to the second position.
      \item Lastly, he does the measurement on the second qubit.
   \end{enumerate}

   The probability of the correct message is $\frac{1}{2^c}$. With a protocol of bias $\epsilon$ (and hence and error of $\frac{1}{2}-\epsilon$), the acceptance probability of the message is $\frac{1}{2^c}(\frac{1}{2}+\epsilon)$. On the other hand, the acceptance probability of the message in the "coin toss" is given by $\frac{1}{2}(1-\frac{1}{2^c})$. Therefore, we have the total acceptance probability:
   \begin{equation}\label{bias}
       \begin{split}
           (1-\frac{1}{2^c})\frac{1}{2}+\frac{1}{2^c}(\frac{1}{2}+\epsilon)
           & = \frac{1}{2}-\frac{1}{2^{c+1}}+\frac{1}{2^{c+1}}+\frac{\epsilon}{2^c}\\
           & = \frac{1}{2}+\frac{\epsilon}{2^c}\\
       \end{split}
   \end{equation}
   The total cost of the protocol is bounded as follows:
   \begin{equation}\label{Q_{[1]}(f)}
       Q_{[1]}^{A\to B}(f) \leq (c+1) \cdot\frac{1}{\epsilon^{\prime 2}}=(c+1)\cdot 2^{2c} \cdot\frac{1}{\epsilon^{2}}\leq 2^{2PP(f)}\cdot({PP(f)}+1)\leq 2^{O(PP(f))},
   \end{equation}
   where $\epsilon^{\prime} = \frac{\epsilon}{2^c}$ from (\ref{bias}).

   \subsection{Proof of Theorem \ref{th:owl}   }

Before we delve into the proof we need a few definitions.We define the notion of  rectangles and two complexity measures: discrepancy and margin complexity.

\begin{definition}[Rectangle]
    A rectangle in $X\times Y$ is a subset $R\subseteq X\times Y$ such that $R=A\times B$ for some $A\subseteq X$ and $B\subseteq Y$.
\end{definition}

\begin{definition}[Discrepancy]
   Let $f:X\times Y\to\{0, 1\}$ be a function, $R$ be any rectangle in the communication matrix, and $\mu$ be a probability distribution on $X\times Y$. The discrepancy of $f$ according to $\mu$ is
   \[disc_{\mu}(f)=\max_R|\Pr_\mu[f(x, y)=0 \hspace{1mm} and \hspace{1mm} (x, y)\in R]-\Pr_\mu[f(x, y)=1 \hspace{1mm} and \hspace{1mm} (x, y)\in R]|.\] Denote $disc(f)=\min_\mu disc_\mu (f)$ as the discrepancy of $f$ over all distributions $\mu$ on $X\times Y$.
 \end{definition}
 It is know that $PP(f)\geq \Omega(\log({\frac{1}{disc(f)}}))$ from Fact 2.8 in \cite{klauck:qcclowerj}, and from Theorem 8.1 in \cite{klauck:qcclowerj} we get $PP(f)\leq O(\log({\frac{1}{disc(f)}}))+\log{n})$.

 \begin{definition}[Margin \cite{margin}]
     For a function $f:X\times Y\to\{0, 1\}$, let $M_f$ denote the sign matrix where all entries are $M_f(x,y)=(-1)^{f(x,y)}$. The margin of $M_f$ is given by: $$m(M_f)=\sup_{\{a_x\}, \{b_y\}} \min_{x, y} \frac{|\braket{a_x|b_y}|}{||a_x||_2||b_y||_2},$$ where the supremum is over all systems of vectors (of any length) $\{a_x\}_{x\in X}, \{b_y\}_{y\in Y}$ such that $sign(\braket{a_x|b_y})=M_f(x,y)$ for all $x,y$.
\end{definition}

 The notion of margin complexity determines the extent to which a given class of functions can be learned by large margin classifiers, which is an important class of machine learning algorithms \cite{margin}.

\begin{proof}
    Assume that the protocol measures the first qubit in the computational basis (if not, then a similar construction as in Theorem \ref{strong} can be used to make this true). The probability of measuring zero is given by $\frac{1}{2}+\frac{tr(I_A\otimes U_B^y\cdot U_A^x\otimes I_B)}{2^{m+1}}$ \cite{Shor}, where $m$ is the total number of qubits involved and the bias is the term $\frac{tr(I_A\otimes U_B^y\cdot U_A^x\otimes I_B)}{2^{m+1}}$. Note that $I_A$ and $I_B$ act on the private qubits of Alice and Bob respectively. Let $ U_A^x \otimes I_B=A_x$ and $I_A\otimes U_B^y=B_y$, and it follows that $$\frac{tr(I_A\otimes U_B^y\cdot U_A^x\otimes I_B)}{2^{m+1}}=\frac{tr(B_yA_x)}{2^{m+1}}=\frac{\braket{b_y|a_x^T}}{2^{m+1}}=\frac{\braket{b_y|a_x^T}}{2||a_x||_2||b_y||_2},$$ where $a_x$ and $b_y$ are the matrices $A_x$ and $B_y$ viewed as vectors, since $A_x$ and $B_y$ are unitary and hence $||a_x||_2=||b_y||_2=2^{\frac{m}{2}}$. If the protocol has bias $\epsilon$, then  $\frac{\braket{b_y|a_x^T}}{2^{m+1}}\geq\epsilon$ for $f(x, y)=1$ and $\frac{\braket{b_y|a_x ^T}}{2^{m+1}}\leq -\epsilon$ for $f(x, y)=0$.

    \begin{remark}
        The size of the unitary matrices does not matter, which is good, since there can be an arbitrarily number of private qubits used by the players but never communicated.
    \end{remark}

    We know from the above that the best possible bias satisfies  $2\epsilon\leq m(f)$. From Theorem 3.1 in \cite{margin} which states that $disc(A)=\Theta(m(A))$, and from Theorem 8.1 in \cite{klauck:qcclowerj}, which states that $PP(f)\leq O(-\log disc(f)+\log n)$ we have $$Q_{[1]}^{A\rightarrow B}(f)\geq\frac{4}{m^2(f)}\geq 2^{\Omega(PP(f))-O(\log{n})}.$$
\end{proof}

\begin{remark}
   This lower bound holds regardless of how much communication is involved: it follows from the fact that one-way one-clean qubit protocols cannot achieve a better bias.
\end{remark}

   \section{The Trivial Lower Bound}\label{ap:triv}

   $Q(f)$ for some functions is given as below \cite{EQ, DISJ, cdnt:ip, ViS, INDEX}:
\begin{itemize}
    \item The equality function (EQ) defined as $$EQ(x, y)=1\iff x=y \hspace{5mm}, \hspace{5mm} EQ(x, y)=0\iff x\neq y,$$ where $x, y\in\{0, 1\}^n$, has $Q(EQ) = \Theta(\log n)$. \\
    Note: No public coin or entanglement.
    \item The disjointness function (DISJ) defined as
    $$DISJ(x, y)=1\iff x\cap y=\emptyset\hspace{5mm},\hspace{5mm }DISJ(x, y)=0\iff x\cap y\neq\emptyset,$$ where $x, y\in\{0, 1\}^n$, has $Q(DISJ) = \Theta(\sqrt{n})$. \\
    Note: $\Omega(\sqrt{n})\leq Q_{[1]}(DISJ)\leq O(n)$.
    \item The inner product modulo two function ($IP_2$) defined as $$IP_2(x, y)=\sum_{i}x_iy_i \mod 2,$$ where $x, y\in\{0, 1\}^n$, has $Q(IP_2) = \Theta(n)$.\\
    Note: $Q_{[1]}^{A\to B}(IP_2)=2^{\Theta(n)}$ while $Q_{[1]}(IP_2)=\Theta(n)$.
    \item The vector in subspace function (ViS) defined as $$ViS(v, H_0)=1\iff v\in H_0\hspace{5mm}, \hspace{5mm}ViS(v, H_0)=0\iff v\in H_0^{\bot},$$ where $v\in\mathbb{R}^n$ and $H_0\subseteq\mathbb{R}^n$ is a subspace with dimension $\frac{n}{2}$, has $Q(ViS) = \Theta(\log n)$.
    \item The index function (INDEX) defined as $$INDEX(x, i)=x_i,$$ where $x\in\{0, 1\}^n$ and $1\leq i\leq n$ has $Q(INDEX) = \Theta(\log{n})$.
\end{itemize}
$Q_{[1]}(EQ)$, $Q_{[1]}(ViS)$ and $Q_{[1]}(INDEX)$ are basically unknown: the lower bounds we know are  $\Omega(\log n)$, but the upper bounds we have are $O(n)$ for $INDEX$ and $EQ$, while Theorem \ref{c-c} implies $Q_{[1]}(ViS)=O(n^2\log n)$.

\section{Proof of Lemma \ref{Lemma1}}

   In the quantum protocol, Alice prepares the first message $\ket{\phi_x}$ by applying a protocol unitary $W_x^{(1)}$ to the all-zero state on the $k$ clean qubits, and sends it to Bob. Bob then applies the protocol unitary $V_y^{\prime(1)}$ to the message sent by Alice and sends the result back to her. Alice then applies her second unitary $W_x^{\prime(2)}$ and does a measurement. This protocol has communication $2k$ and accepts 0-inputs with probability at most $q$ and accepts 1-inputs with probability at least $p$, where $p>q$.

   Given any state $\ket{\phi_x}$, we can find an orthonormal basis $\beta_x=\{\ket{\beta_1}\cdots\ket{\beta_{2^k}}\}$ that includes $\ket{\phi_x}$ so that $\ket{\phi_x}$ is a member of the basis and $\sum_{i=1}^{2^k}\frac{\ket{\beta_i}\bra{\beta_i}}{2^k}=\frac{I}{2^k}$, such that the state $I/2^k$ is the uniform distribution on the elements in the basis. Consider a one-clean-qubit protocol that simulates the above quantum protocol and  goes as follows:
   \begin{enumerate}
       \item We define Alice's unitary $W_x^{\prime(1)}$ such that
       \begin{itemize}
           \item If $\ket{\beta_i}=\ket{\phi_x}$, then $W_x^{\prime(1)}:\ket{0}\ket{\beta_i} \mapsto\ket{1}\ket{\beta_i}$
           \item If $\ket{\beta_i}\neq\ket{\phi_x}$, then $W_x^{\prime(1)}:\ket{0}\ket{\beta_i} \mapsto\ket{0}\ket{\beta_i}$ \\and extend to a unitary in any possible way.
        \end{itemize}
   where $\ket{\beta_i}\in\beta_x$. Alice applies $W_x^{\prime(1)}$ to the initial state, in particular, computes $\sigma_x=W_x^{\prime(1)}(\ket{0}\bra{0}\otimes\frac{I}{2^c})W_x^{\prime(1)\dagger}$.
   \item Alice then sends the last $k$ qubits to Bob.
   \item Bob applies the unitary $V_y^{\prime(1)}$ to the qubits he received from Alice, in particular computes $\sigma_y=I\otimes V_y^{\prime(1)}(\sigma_x)I\otimes V_y^{\prime(1)\dagger}$, where dim($I$)=2. Bob sends the qubits back to Alice.
   \item Alice applies her unitary $W_x^{\prime(2)}$ (tensored with identity on the first qubit) to  $\sigma_y$ and measures the first qubit. She outputs 0 if she obtains a measurement result of $\ket{0}$. On the other hand, if she obtains a measurement of $\ket{1}$, she proceeds to execute the measurement of the original quantum protocol. In this case, the acceptance probability of 0-inputs is at most $\frac{q}{2^k}$ and the acceptance probability for 1-inputs is at least $\frac{p}{2^k}$. Note that the two measurements can be combined into one.
   \end{enumerate}
   The simulation of a $k$-clean-qubit quantum protocol by a one-clean-qubit protocol is shown in Figure \ref{fig:image2}:

   \begin{figure}[h]
   \begin{subfigure}{0.5\textwidth}
      \includegraphics[width=\linewidth, height=2cm]{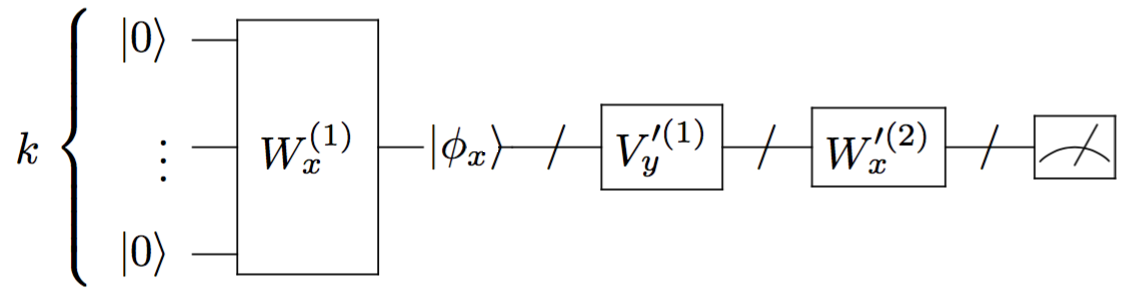}
      \caption{Original Quantum Protocol}
      \label{fig:subim1}
   \end{subfigure}
   \begin{subfigure}{0.5\textwidth}
      \includegraphics[width=\linewidth, height=2cm]{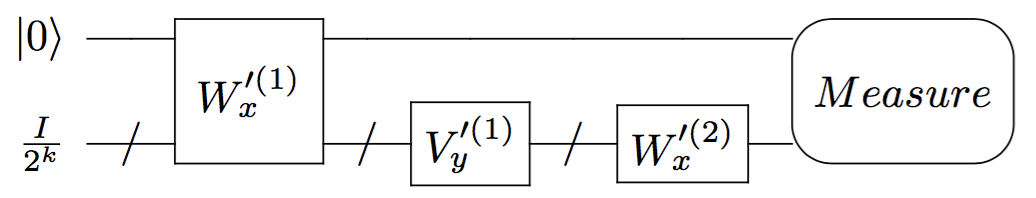}
      \caption{One-Clean-Qubit Protocol}
      \label{fig:subim2}
   \end{subfigure}
   \caption{Simulation by a one-clean-qubit protocol}
   \label{fig:image2}
   \end{figure}

\section{The Simulation Lower Bound}\label{app:low}

First, we insert dummies into the first $\frac{n}{2}-1$ entries of each string (set all to 1) and the remaining entries are drawn according to a distribution that will be defined in Fact \ref{dist}.

Consider the linear program (LP) for the rectangle bound (see \cite{Klauck}) as follows, where we set the acceptance probability for 1-inputs to be at least $\alpha=\frac{1}{n^3}$. We consider an additive error of $\frac{1}{n^4}$ and the simulation is required to accept 0-inputs with probability at most $\frac{1}{n^4}$ and accept 1-inputs with probability at least $\frac{2}{n^3}-\frac{1}{n^4}\geq\frac{1}{n^3}=\alpha$. Recall that we consider as 1-inputs only those $x,y$ with $\sum_i x_iy_i=\frac n2-1$, and as 0-inputs those with $\sum_i x_iy_i=\frac n2$. Denote by $\cal R$ the set of all rectangles in the communication matrix.
\\
   \underline{Primal}
    \begin{equation*}
      \begin{array}{ll@{}ll}
         \text{minimize}  & \displaystyle\sum\limits_{R\in \cal R} W_R &\\
         \text{subject to} &  \displaystyle\sum\limits_{\{R\in {\cal R}|x, y\in R\}} & W_R \geq\alpha,  & \mbox{for all } x,y: \sum_i x_iy_i=\frac n2-1\\
         & \displaystyle\sum\limits_{\{R\in {\cal R}|x, y\in R\}} & -W_R \geq-\frac{1}{n^4},  & \mbox{for all } x,y: \sum_i x_i y_i=\frac n2\\
         & &  W_R\geq 0
      \end{array}
   \end{equation*}
   \underline{Dual}
   \begin{equation*}
      \begin{array}{ll@{}ll}
         \text{maximize}  & \displaystyle\sum_{\{x, y|\sum_{i}x_i y_i=\frac{n}{2}-1\}} \alpha\gamma_{xy}-\sum_{\{x, y|\sum_{i}x_i y_i=\frac{n}{2}\}}\frac{1}{n^4}\sigma_{xy} &\\
         \text{subject to} & \displaystyle\sum_{\substack{\{x, y \in R|\sum_{i}x_i y_i=\frac{n}{2}-1\}}}\gamma_{xy}-\sum_{\substack{\{x, y\in R|\sum_{i}x_i y_i=\frac{n}{2}\}}}\sigma_{xy} & \leq 1 &
      \mbox{ for all } R\in \cal R \\
         & \hspace{80mm} \sigma_{xy}, \gamma_{x, y} & \geq 0\\
      \end{array}
   \end{equation*}
A protocol $\mathcal{P}$ that accepts 1-inputs with probability at least $\frac{1}{n^3}$ and accepts 0-inputs with probability at most $\frac{1}{n^4}$ can be viewed as a probability distribution on deterministic protocols. Each deterministic protocol (in a randomized public-coin protocol) can be represented by a protocol tree. The probabilities of decision trees are given as $p_1, p_2,\ldots,p_t$. Every leaf in each decision tree has an attached rectangle, and a decision: accept or reject. We consider only the rectangles which lead to acceptance, and we assign weight 0 to those rectangles that do not appear in any protocol tree at an accepting leaf and weight $W_R=\displaystyle\sum\limits_{\{i|R \mbox{\small\ accepted in  tree } i\}}p_i$ for rectangles appearing in protocol trees $i$.
   \begin{claim}
      The constraints in the primal LP hold.
   \end{claim}
   \begin{proof}
      \begin{itemize}
         \item Let $(x, y)$ be a 1-input. Summing up all the probabilities of the decision trees where $(x, y)$ is in a 1-rectangle, we get the LHS of the first inequality constraint, which also corresponds to the acceptance probability, which must exceed $\alpha$ on the RHS.
         \item Let $(x, y)$ be a 0-input. Adding up the probabilities of decision trees where $(x, y)$ appears in a 1-rectangle will give the LHS of the second inequality constraints, which is at most $1/n^4$ because that  is the maximum additive error allowed.
         \item The nonnegativity constraint is automatically fulfilled since $W_R$'s are sums of probabilities which must be at least zero.
      \end{itemize}
   \end{proof}
   \begin{claim}
      If there is a classical protocol that accepts 1-inputs with probability $\geq\alpha$ and 0-inputs with probability $\leq 1/n^4$ and communication $c$ then there exists a solution of cost $2^c$ for the primal LP.
   \end{claim}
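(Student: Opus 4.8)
The plan is to exhibit the randomized protocol itself as the feasible point of the primal LP. First I would write the public-coin randomized protocol as a convex combination of deterministic protocols: there are deterministic protocols $D_1,\dots,D_t$, each of communication at most $c$, together with probabilities $p_1,\dots,p_t\ge 0$ with $\sum_i p_i=1$, such that running $D_i$ with probability $p_i$ reproduces the given protocol's behaviour on every input. Each $D_i$ is a protocol tree of depth at most $c$, hence has at most $2^c$ leaves; the leaves of $D_i$ partition $X\times Y$ into at most $2^c$ combinatorial rectangles, and each leaf carries a fixed decision (accept or reject). As in the preceding claim, set $W_R=\sum_{\{i\,:\,R\text{ is an accepting leaf of }D_i\}}p_i$ for every $R\in\mathcal R$, and $W_R=0$ if $R$ never occurs as an accepting leaf.

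Next I would invoke the preceding claim to conclude that these $W_R$ already satisfy all primal constraints: on any input $(x,y)$ the acceptance probability of the randomized protocol equals $\sum_{\{R:(x,y)\in R\}}W_R$, which by hypothesis is $\ge\alpha$ for $1$-inputs and $\le 1/n^4$ for $0$-inputs, while $W_R\ge 0$ holds trivially. So the only thing left is to bound the objective $\sum_{R\in\mathcal R}W_R$.

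For the cost I would reorganize the double sum. Since within a single deterministic tree $D_i$ distinct leaves give distinct rectangles, one has
\[
\sum_{R\in\mathcal R}W_R=\sum_{R\in\mathcal R}\ \sum_{\{i\,:\,R\text{ accepting leaf of }D_i\}}p_i=\sum_{i=1}^{t}p_i\cdot\#\{\text{accepting leaves of }D_i\}\le\sum_{i=1}^{t}p_i\cdot 2^c=2^c,
\]
using that $D_i$ has at most $2^c$ leaves and that $\sum_i p_i=1$. Hence $\{W_R\}_{R\in\mathcal R}$ is a feasible primal solution of cost at most $2^c$, which proves the claim. The only delicate point is the bookkeeping in the middle equality: one must observe that a given tree $D_i$ contributes the mass $p_i$ exactly once for each of its (at most $2^c$) accepting leaves, and that summing over $R$ merely regroups those contributions by rectangle, so that nothing is lost and nothing is double-counted, the leaves of any one tree being pairwise disjoint. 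Everything else is routine.
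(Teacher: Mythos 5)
Your proposal is correct and follows essentially the same route as the paper: decompose the public-coin protocol into deterministic trees with probabilities $p_i$, assign each accepting rectangle the total probability of the trees in which it appears, and bound $\sum_R W_R\leq\sum_i p_i\cdot 2^c=2^c$ using the $2^c$ bound on the number of leaves. The paper's own proof is just a terser version of your final regrouping step, so nothing further is needed.
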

   \begin{proof}
      The contribution of each decision tree to $W_R$ is at most $2^c\cdot p_i$, since there are at most $2^c$ leaves in each decision tree. Therefore,
      \[\sum_{R\in \cal R}W_R\leq\sum_{i=1}^{t}2^c\cdot p_i=2^c.\]
   \end{proof}

    Therefore, in a $\frac{1}{n^4}$-error simulation of a quantum protocol (that accepts 1-inputs with probability at least $\frac{2}{n^3}$ and accepts 0-inputs with probability 0), the simulating randomized protocol (with communication $c$) must accept 1-inputs with probability at least $\frac{2}{n^3}-\frac{1}{n^4}\geq \frac{1}{n^3}$ and accept 0-inputs with probability at most $\frac{1}{n^4}$, and hence yield a solution to the primal LP of cost at most $2^c$. By LP duality the primal and its dual have the same cost, and we want to show the lower bound for the cost. Hence, we work with the dual.

   In the dual, both $\gamma_{xy}$ and $\sigma_{xy}$ are nonzero if $x_1=\cdots=x_{\frac{n}{2}-1}=y_1=\cdots=y_{\frac{n}{2}-1}=1$ and are zero otherwise.
   \begin{fact}[Razborov's distribution on 0- and 1-inputs for Disjointness on $\frac{n}{2}+1$ inputs \cite{recbound}]\label{dist}\hfill
       \begin{itemize}
           \item $\mu_1$ (distribution on 1-inputs)\\
           $(x, y)$ is chosen uniformly at random subject to:
           \begin{itemize}
               \item $x$, $y$ each have exactly $\frac{n/2+1}{4}$ 1's
               \item There is no index $i\in\{1,2,\cdots ,\frac{n}{2}+1\}$ in which $x_i=y_i=1$.
           \end{itemize}
           \item $\mu_0$ (distribution on 0-inputs)\\
           $(x, y)$ is chosen uniformly at random subject to:
           \begin{itemize}
               \item $x$, $y$ each have exactly $\frac{n/2+1}{4}$ 1's
               \item There is exactly one index $i\in\{1,2,\cdots ,\frac{n}{2}+1\}$ in which $x_i=y_i=1$.
           \end{itemize}
       \end{itemize}
   \end{fact}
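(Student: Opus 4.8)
The final statement is not really a theorem but a \emph{definition} of Razborov's hard input pair $(\mu_0,\mu_1)$, imported from \cite{recbound}. Consequently the only self-contained thing to check is that $\mu_0$ and $\mu_1$ are bona fide probability distributions, while the substantive property actually invoked downstream (the rectangle, or corruption, bound $\mu_0(R)\ge \epsilon\,\mu_1(R)-2^{-\delta n}$ for all rectangles $R$) is Razborov's theorem and is cited rather than reproved. The plan is therefore to dispatch well-definedness directly and to indicate how the corruption bound is obtained.

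Well-definedness is immediate. Writing $m=\frac n2+1$ for the number of promise coordinates, each distribution is uniform over an explicitly described finite family of pairs $(x,y)$ with $|x|=|y|=\frac m4$; for $\mu_1$ the supporting family consists of all such pairs with $x\wedge y=0$, and for $\mu_0$ of all such pairs with $|x\wedge y|=1$. Both families are nonempty for the relevant $n$ (so that $\frac m4$ is an integer and $2\cdot\frac m4\le m$), hence each uniform distribution is well defined; no further argument is needed here.

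For the corruption bound itself I would follow Razborov's \cite{recbound} switching argument. Fix a rectangle $R=A\times B$. Couple $\mu_1$ to $\mu_0$ by the operation that takes a disjoint pair $(x,y)\sim\mu_1$ and plants a single collision, adding a common $1$ at a uniformly chosen free coordinate (and removing one $1$ elsewhere to preserve the prescribed weights); the image is distributed as $\mu_0$ up to a known density factor. Because $R$ is a product set one cannot reason pointwise, so the heart of the argument is to show that whenever $\mu_1(R)\ge 2^{-\delta n}$ the conditional row and column densities of $R$ under $\mu_1$ are spread out enough that the planted collision lands back inside $R$ with probability bounded below by an absolute constant on average; this yields $\mu_0(R)\ge \epsilon\,\mu_1(R)$ up to the additive slack $2^{-\delta n}$ contributed by the low-density rectangles.

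The main obstacle is exactly this combinatorial core: quantifying, via hypergeometric tail estimates and the specific choice of set size $\approx m/4$, that perturbing a pair by a single coordinate cannot eject it from an arbitrary large product set too often. This is precisely the content established in \cite{recbound}, and it is also where the constant $\epsilon$ (with the side condition $\epsilon\ge 20/n$ used in the surrounding appendix) and the exponent $\delta$ are fixed. I would therefore cite that analysis for the bound and reproduce only the elementary well-definedness check above.
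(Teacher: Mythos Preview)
Your assessment is correct and matches the paper's treatment: this Fact is a cited definition from \cite{recbound}, not something the paper proves, and the paper simply states it and then invokes the resulting corruption inequality $\mu_0(R)\geq\epsilon\,\mu_1(R)-2^{-2\delta n}$ without reproducing Razborov's argument. Your additional sketch of the coupling intuition behind the corruption bound goes beyond what the paper does (which is pure citation), but is a reasonable gloss and does not conflict with anything in the paper.
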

   From \cite{recbound}, the rectangle bound for our problem is as follows:
   \begin{center}
       For all rectangles $R=A\times B$ with $A, B\subseteq\{0, 1\}^{\frac{n}{2}+1}$, there exist constants $\epsilon,\delta>0$ such that
   \end{center}
   \begin{equation}\label{bound}
           \mu_0(R)\geq\epsilon\cdot\mu_1(R)-2^{2\delta n}.
       \end{equation}
    For a rectangle $R=A\times B$ where $A, B\subseteq\{0, 1\}^n$ let $\Tilde{R}\subseteq R$, where $\Tilde{R}$ is the subrectangle in which all $x_1=\cdots=x_{\frac{n}{2}-1}=y_1=\cdots=y_{\frac{n}{2}-1}=1$ and $\Tilde{x}=x_{\frac{n}{2}}\cdots x_n, \Tilde{y}=y_{\frac{n}{2}}\cdots y_n$ denote the substrings of  $x$ and $y$ which have length $\frac{n}{2}+1$ each.

    We seek   a  solution of the dual. The following are only true if $x_1=\cdots=x_{\frac{n}{2}-1}=y_1=\cdots=y_{\frac{n}{2}-1}=1$, otherwise $\gamma_{x,y},\sigma_{xy}=0$:

   \begin{itemize}
       \item 1-inputs: $\gamma_{xy}=\mu_1(\Tilde{x},\Tilde{y})\cdot 2^{\delta n}$
       \item 0-inputs: $\sigma_{xy}=\mu_0(\Tilde{x},\Tilde{y})\cdot 2^{\delta n}\cdot\frac{\alpha}{\frac{1}{n^4}}\cdot\frac{1}{10}=\mu_0(\Tilde{x},\Tilde{y})\cdot 2^{\delta n}\cdot\frac{n}{10}$
   \end{itemize}
   Now, we check if all the constraints are satisfied. we analyze the following cases:
   \begin{itemize}
       \item $\mu_1(\Tilde{R})\leq 2^{-\delta n}$:\\
       \begin{align*}
           \begin{split}
               & \displaystyle\sum_{\substack{ \{x,y  \in R   |\sum_{i}x_i  y_i=\frac{n}{2}-1\}}}\gamma_{xy}-\sum_{\substack{\{x,y \in R|\sum_{i}x_i  y_i=\frac{n}{2}\}}}\sigma_{xy} \\
            \leq & \displaystyle\sum_{\substack{ \{x,y \in R |\sum_{i}x_i y_i=\frac{n}{2}-1\}}}\gamma_{xy} \\
          \end{split}
      \end{align*}
      \begin{align*}
          \begin{split}
               & = \displaystyle\sum_{\substack{x, y\in R:\\|\Tilde{x}\wedge \Tilde{y}|=0\\x_1=\cdots=x_{\frac {n}{2}-1}=y_1=\cdots=y_{\frac{n}{2}-1}=1}}\gamma_{xy}\\
               & = \displaystyle\sum_{\substack{x, y\in R:\\|\Tilde{x}\wedge \Tilde{y}|=0\\x_1=\cdots=x_{\frac {n}{2}-1}=y_1=\cdots=y_{\frac{n}{2}-1}=1}}\mu_1(\Tilde{x},\Tilde{y})\cdot 2^{\delta n}\\
               & = \mu_1(\Tilde{R})\cdot 2^{\delta n}\\
               & \leq 1
           \end{split}
       \end{align*}
       \item $\mu_1(\Tilde{R})\geq 2^{-\delta n}$:\\
       \begin{align*}
           \begin{split}
               & \displaystyle\sum_{\substack{\{x, y \in R|\sum_{i}x_i y_i=\frac{n}{2}-1\}}}\gamma_{xy}-\sum_{\substack{\{x, y\in R|\sum_{i}x_i y_i=\frac{n}{2}\}}}\sigma_{xy}\\
               & = \displaystyle\sum_{\substack{x, y\in R:\\|\Tilde{x}\wedge \Tilde{y}|=0\\x_1=\cdots=x_{\frac{n}{2}-1}=y_1=\cdots=y_{\frac{n}{2}-1}=1}}\gamma_{xy}
               -\displaystyle\sum_{\substack{x, y\in R:\\|\Tilde{x}\wedge \Tilde{y}|=1\\x_1=\cdots=x_{\frac{n}{2}-1}=y_1=\cdots=y_{\frac{n}{2}-1}=1}}\sigma_{xy}\\
               & = \displaystyle\sum_{\substack{x, y\in R:\\|\Tilde{x}\wedge \Tilde{y}|=0\\x_1=\cdots=x_{\frac{n}{2}-1}=y_1=\cdots=y_{\frac{n}{2}-1}=1}}\mu_1(\Tilde{x},\Tilde{y})\cdot 2^{\delta n}\\
               & -\displaystyle\sum_{\substack{x, y\in R:\\|\Tilde{x}\wedge \Tilde{y}|=1\\x_1=\cdots=x_{\frac{n}{2}-1}=y_1=\cdots=y_{\frac{n}{2}-1}=1}}\mu_0(\Tilde{x},\Tilde{y})\cdot 2^{\delta n}\cdot\frac{n}{10}\\
               & = 2^{\delta n}\big(\mu_1(\Tilde{R})-\frac{n}{10}\cdot\mu_0(\Tilde{R})\big)\\
               & \leq 0\\
           \end{split}
       \end{align*}
       since $\mu_1(\Tilde{R})\geq 2^{-\delta n}$ and hence  $\mu_0(\Tilde{R})\geq\frac{\epsilon}2{}\cdot\mu_1(\Tilde{R})$ from (\ref{bound})\footnote{Provided that $\epsilon\geq\frac{20}{n}$.}.
   \end{itemize}
   The functional constraints are satisfied for both cases.

   Substituting the value of $\sigma_{xy}$ and $\gamma_{x,y}$ into the objective function, we get
   \begin{align*}
       \begin{split}
           & \displaystyle\sum_{\{x, y|\sum_{i}x_i y_i=\frac{n}{2}-1\}} \alpha\gamma_{xy}-\sum_{\{x, y|\sum_{i}x_i y_i=\frac{n}{2}\}}\frac{1}{n^4}\sigma_{xy}\\
           & = \displaystyle\sum_{\substack{\{x, y|\sum_{i}x_i y_i=\frac{n}{2}-1\\x_1=\cdots=x_{\frac{n}{2}-1}=y_1=\cdots=y_{\frac{n}{2}-1}=1\}}} \alpha\gamma_{xy}-\displaystyle\sum_{\substack{\{x, y|\sum_{i}x_i y_i=\frac{n}{2}\\x_1=\cdots=x_{\frac{n}{2}-1}=y_1=\cdots=y_{\frac{n}{2}-1}=1\}}}\frac{1}{n^4}\sigma_{xy}\\
       \end{split}
   \end{align*}
   \begin{align*}
       \begin{split}
           & = 2^{\delta n}\cdot\Bigg(\sum_{\substack{\{x, y|\sum_{i}x_i y_i=\frac{n}{2}-1\\x_1=\cdots=x_{\frac{n}{2}-1}=y_1=\cdots= y_{\frac{n}{2}-1}=1\}}}\frac{1}{n^3}\cdot\mu_1(\Tilde{x},\Tilde{y})-\displaystyle\sum_{\substack{\{x, y|\sum_{i}x_i y_i=\frac{n}{2}\\x_1=\cdots=x_{\frac{n}{2}-1}=y_1=\cdots= y_{\frac{n}{2}-1}=1\}}}\frac{1}{n^4}\cdot\mu_0(\Tilde{x},\Tilde{y})\cdot\frac{n}{10}\Bigg)\\
           & = 2^{\delta n}\big(\frac{1}{n^3}-\frac{1}{10n^3}\big)\\
           & = 2^{\delta n}\cdot\frac{9}{10n^3}\\
           & = 2^{\Omega(n)}\\
       \end{split}
   \end{align*}
   This implies that the communication needed is at least
   $$\log({2^{\Omega(n)}})=\Omega(n).$$

   This finishes the proof of Theorem \ref{long}.

\section{Proof of Theorem \ref{ABC}}

Since $ABC$ is either $I$ or $-I$, it follows that tr($ABC)$ is either $n$ or $-n$, and hence either all diagonal entries $A_i BC_i$ are 1 or -1. The protocol below is performed for an arbitrarily chosen $i$.
\begin{enumerate}
    \item Let $A_i$ represent Alice's row vectors from matrix $A$ and $C_i$ represent Charlie's column vectors from matrix $C$. Charlie and Bob share a set of $2^{O(k)}$ random unit vectors $W_j\in S^{n-1}$ as a  public coin, where $k$ is a parameter to be determined later and  $S^{n-1}=\{x\in\mathbb{R}^n:\sum_i x_i^2=1\}$. Among the $2^{O(k)}$ vectors shared with Bob, Charlie computes $W_{max}=argmax_{W_j\in T}\{\braket{W_j|C_i}\}$.
    \begin{lemma}
       Define $T=\{W_j:W_j\in S^{n-1}\}$ as the set of vectors randomly drawn from $S^{n-1}$ under the Haar measure (the unique rotationally-invariant probability measure on $S^{n-1}$ ) such that $|T|={32\sqrt{k}}e^{2k}$. If $v\in S^{n-1}$ is a fixed vector, then there exists a $W_j\in T$ that has an inner product with $v$ that is greater than $\sqrt{\frac{k}{n}}$ with high probability, for  all $1\leq k\leq \frac{n}{4}$.
    \end{lemma}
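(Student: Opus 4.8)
The plan is to reduce the statement to an anti-concentration estimate for a single Haar-random unit vector and then amplify it over the $|T|$ independent samples. First, by rotational invariance of the Haar measure on $S^{n-1}$ we may assume without loss of generality that $v=e_1$, so that $\braket{v|W}$ is distributed as the first coordinate $W_1$ of a uniformly random point of $S^{n-1}$. This coordinate has density proportional to $(1-t^2)^{(n-3)/2}$ on $[-1,1]$ and is symmetric about $0$, so $\Pr[W_1\geq s]=\tfrac12\Pr[W_1^2\geq s^2]$ for every $s\geq 0$.

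The key input is the spherical-cap bound $\Pr[\braket{v|W}^2\geq k/n]\geq e^{-k}/(16\sqrt k)$, valid for $1\leq k\leq n/4$, which is Lemma~1 of \cite{caps}; it is proved by comparing $\int_{\sqrt{k/n}}^{1}(1-t^2)^{(n-3)/2}\,dt$ with the normalizing Beta integral, using the approximation $(1-t^2)^{(n-3)/2}\approx e^{-nt^2/2}$ in the relevant range. Combining it with the symmetry of $W_1$ gives, for a single sample $W$, the bound $\Pr[\braket{v|W}\geq\sqrt{k/n}]\geq p:=e^{-k}/(32\sqrt k)$.

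Since the $W_j\in T$ are drawn independently, the probability that every one of them has inner product with $v$ below $\sqrt{k/n}$ is at most $(1-p)^{|T|}$. Writing $|T|=32\sqrt k\,e^{2k}=p^{-1}e^{k}$, we obtain
\[(1-p)^{|T|}=\big[(1-p)^{1/p}\big]^{e^{k}}\leq \big(\tfrac1e\big)^{e^{k}}=e^{-e^{k}},\]
which is doubly exponentially small in $k$. Hence with probability at least $1-e^{-e^{k}}$ there is a $W_j\in T$ with $\braket{v|W_j}\geq\sqrt{k/n}$, which is the claim; in particular such a $W_j$ can be decomposed as $W_j=\alpha\ket{v}+\sqrt{1-\alpha^2}\ket{\sigma}$ with $\alpha\geq\sqrt{k/n}$, as needed for the $ABC$ protocol.

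The only real obstacle is the single-vector cap estimate itself: one needs a lower bound of the correct order $\Theta(e^{-k}/\sqrt k)$ on the cap measure, since a cruder (merely exponentially small) bound would not be enough to make the failure probability $(1-p)^{|T|}$ small for this particular choice of $|T|$. But that sharp estimate is exactly what \cite{caps} supplies, so once it is invoked the remainder is just the elementary independence computation above.
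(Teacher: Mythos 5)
Your proposal is correct and follows essentially the same route as the paper's own proof: invoke the spherical-cap lower bound from Lemma~1 of \cite{caps}, halve it by symmetry to handle the sign, and then use independence of the samples together with the identity $|T|=32\sqrt{k}\,e^{2k}=p^{-1}e^{k}$ to bound the failure probability by $(1-p)^{1/p\cdot e^{k}}\leq e^{-e^{k}}$. The only additions beyond the paper's argument are the (correct but inessential) reduction to $v=e_1$ and the sketch of how the cap estimate itself is proved.
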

    \begin{proof}
       According to Lemma 1 in \cite{caps}, Pr($\braket{v, W_j}^2\geq\frac{k}{n})\geq \frac{e^{-k}}{16\sqrt{k}}$ for $W_j\in S^{n-1}$ uniformly at random. We have  Pr($\braket{v, W_j}\geq\sqrt{\frac{k}{n}})\geq \frac{e^{-k}}{32\sqrt{k}}$ due to the fact that $\braket{v, W_j}$ could be negative. By the definition of $T$, we have that $$Pr(\forall W_j\in T:\braket{v, W_j}\leq\sqrt{\frac{k}{n}})\leq \big(1-\frac{e^{-k}}{32\sqrt{k}}\big)^{{32\sqrt{k}}e^{2k}}=\Big[\Big(1-\frac{1}{32\sqrt{k}e^{k}}\Big)^{32\sqrt{k}e^{k}}\Big]^{e^k}\leq({\frac{1}{e})}^{e^k}.$$ In other words, the probability of all $W_j$'s in the sample having an inner product with $v$ that is less than $\sqrt{\frac{k}{n}}$, is extremely small. This implies that there exists a $W_j\in T$ such that $\braket{v, W_j}\geq\sqrt{\frac{k}{n}}$ with high probability.
    \end{proof}

    Recall that $W_{max}$ is the vector that maximizes the inner product with $C_i$, then
    $$W_{max}=\alpha\ket{C_i}+\sqrt{1-\alpha^2}\ket{\sigma_i},$$
    where $\sigma_i\bot C_i$ and $\alpha\geq\sqrt{\frac{k}{n}}$.
    \item Next, Charlie sends the name (i.e., its index in $T$) of $W_{max}$ to Bob. This requires $O(k)$ bits of communication. Bob then computes the following: $$B\ket{W_{max}}=\alpha B\ket{C_i}+\sqrt{1-\alpha^2}B\ket{\sigma_i}.$$
    \item Alice and Bob then jointly estimate the inner product between $B\ket{W_{max}}$ and $A_i$ by using the protocol proposed by Kremer, Nisan and Ron \cite{knr:rand1round}.
    \begin{fact}[Inner Product Estimation Protocol by Kremer, Nisan and Ron \cite{knr:rand1round}]
       The inner product estimation protocol approximates the inner product between two vectors from $S^{n-1}$ within $\epsilon$ additive error, which requires communication $O(\frac{1}{\epsilon^2})$.
    \end{fact}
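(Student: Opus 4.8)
The plan is to give a one-round public-coin protocol based on random hyperplane rounding (the Goemans--Williamson sign trick). Alice holds $u\in S^{n-1}$ and Bob holds $v\in S^{n-1}$, and we write $\theta=\arccos\langle u,v\rangle\in[0,\pi]$ for the angle between them, so that estimating $\langle u,v\rangle=\cos\theta$ is equivalent to estimating $\theta$. First I would have the players use their shared public randomness to sample $t$ independent standard Gaussian vectors $g_1,\dots,g_t\in\mathbb{R}^n$; this costs no communication, since public coins are free. The key geometric fact is that for a single Gaussian $g$, by rotational invariance one may project onto the two-dimensional plane spanned by $u$ and $v$, where a uniformly random direction separates two unit vectors at angle $\theta$ with probability exactly $\theta/\pi$; hence $\Pr[\mathrm{sign}\langle u,g\rangle\neq\mathrm{sign}\langle v,g\rangle]=\theta/\pi$.

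The protocol is then immediate: for each $i$ Alice sends the single bit $s_i=\mathrm{sign}\langle u,g_i\rangle$, and Bob computes the empirical agreement frequency $\hat p=\tfrac1t\,\bigl|\{i:s_i=\mathrm{sign}\langle v,g_i\rangle\}\bigr|$, an unbiased estimator of $1-\theta/\pi$. Alice's message has length exactly $t$. Next I would control $t$ by a Hoeffding bound: the agreement indicators are i.i.d.\ and $[0,1]$-valued, so $\Pr[|\hat p-(1-\theta/\pi)|>\delta]\le 2e^{-2t\delta^2}$, and taking $t=O(1/\delta^2)$ drives the failure probability below any fixed constant. Bob then outputs $\cos\hat\theta$ with $\hat\theta=\pi(1-\hat p)$. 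Since $|\hat\theta-\theta|=\pi\,|\hat p-(1-\theta/\pi)|\le\pi\delta$ and $|\tfrac{d}{d\theta}\cos\theta|=|\sin\theta|\le1$, the mean value theorem gives $|\cos\hat\theta-\cos\theta|\le\pi\delta$. Setting $\delta=\epsilon/\pi$ yields additive error at most $\epsilon$ in the inner product using $t=O(1/\epsilon^2)$ bits, exactly as claimed.

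The one subtlety worth addressing is that the inputs and the shared Gaussians are real-valued. I would argue this is immaterial: the Gaussians are public coins and are never communicated, so rounding each coordinate to $O(\log(n/\epsilon))$ bits of precision perturbs each separation probability $\theta/\pi$ by an amount negligible compared with $\delta$, while Alice's messages remain single sign bits. Hence the communication cost is unaffected and the analysis goes through verbatim; equivalently, one simply works in the standard communication model with access to shared real randomness. The main step to get right is the exact hyperplane-separation probability $\theta/\pi$ together with the Lipschitz estimate $|\sin\theta|\le1$ relating the error in $\theta$ to the error in $\cos\theta$; the remainder is a routine concentration calculation. I do not expect any genuine obstacle here, which is why the statement is invoked as a black-box fact from \cite{knr:rand1round}.
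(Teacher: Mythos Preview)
Your argument is correct. The paper does not prove this statement at all: it is recorded as a \emph{Fact} and attributed to Kremer, Nisan and Ron \cite{knr:rand1round} as a black box, with no protocol or analysis given. So there is nothing in the paper to compare against beyond the citation.

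That said, your self-contained derivation is the standard one and goes through cleanly. The random-hyperplane identity $\Pr[\mathrm{sign}\langle u,g\rangle\neq\mathrm{sign}\langle v,g\rangle]=\theta/\pi$ is exactly the Goemans--Williamson fact, the Hoeffding step giving $t=O(1/\delta^2)$ is routine, and the $1$-Lipschitz bound on $\cos$ converts an $O(\delta)$ error in $\theta$ to an $O(\delta)$ error in the inner product, so $t=O(1/\epsilon^2)$ sign bits suffice. Your remark that only sign bits are ever sent (the Gaussians live in the public randomness) correctly explains why the cost is independent of $n$ and why real-valued inputs pose no issue. This is precisely the kind of protocol the citation points to, so your write-up can stand in for the missing proof without any change to the surrounding text.
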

    \begin{align*}
        \begin{split}
            \bra{A_i}B\ket{W_{max}}
            & =\alpha \bra{A_i}B\ket{C_i}+\sqrt{1-\alpha^2}\bra{A_i}B\ket{\sigma_i}\\
            & = \pm\alpha+\sqrt{1-\alpha^2}\bra{A_i}B\ket{\sigma_i},\\
        \end{split}
    \end{align*}
    where $\sqrt{1-\alpha^2}\bra{A_i}B\ket{\sigma_i}=0$ since $\sigma_i\bot C_i$ and $B^T A^T_i$ is either equal to $C_i$ or $-C_i$. That is to say,
    \[\bra{A_i}B\ket{W_{max}}
          \begin{cases}
            \geq \sqrt{\frac{k}{n}},\hspace{1mm} \text{for}\hspace{1mm} \text{1-inputs}\\
            \leq -\sqrt{\frac{k}{n}},\hspace{1mm} \text{for}\hspace{1mm}\text{0-inputs}.\\
         \end{cases}
        \]
    Setting $\epsilon$ to be smaller than $\sqrt{\frac{k}{n}}$, say $\frac{1}{100}\sqrt\frac{k}{n}$, to allow for sufficient separation between 0- and 1-inputs, Kremer, Nisan and Ron's protocol requires $O(\frac{n}{k})$ communication.
    \end{enumerate}
    In order to minimize the total amount of communication ($O(k)$ in Step 2 and $O(\frac{n}{k})$ in Step 3), we set $k=\sqrt{n}$. Therefore, the total amount of randomized communication required for $ABC$ is $O(\sqrt{n})$.

   \end{appendix}

\end{document}